\pdfoutput=1
\documentclass[10pt,  conference, letterpaper]{IEEEtran}

\IEEEoverridecommandlockouts

\usepackage[table]{xcolor}

\usepackage{amsmath, amsthm, amssymb, amsfonts}
\usepackage{dsfont}
\usepackage{graphicx}
\usepackage{subcaption}
\usepackage{tikz}
\usepackage{pgfplots}
\pgfplotsset{compat=newest}
\usetikzlibrary{
  arrows, arrows.meta, shapes, positioning, calc,
  shadows, shadows.blur, automata,
  decorations.pathmorphing, backgrounds, fit
}

\usepackage{array}
\usepackage{tabularx}
\usepackage{booktabs}
\usepackage{multirow}
\usepackage{threeparttable}
\usepackage{footnote}
\makesavenoteenv{tabularx}

\usepackage{enumitem}
\usepackage{geometry}
\usepackage{fancyhdr}
\usepackage{multicol}
\usepackage{lipsum}

\usepackage{cite}
\usepackage{hyperref}

\usepackage{algorithm}
\usepackage[noend]{algpseudocode}

\usepackage{authblk}

\newcommand{\tr}{\operatorname{tr}}

\newtheorem{theorem}{Theorem}
\newtheorem{lemma}{Lemma}

\newtheorem{definition}{Definition}
\newtheorem{assumption}{Assumption}
\newtheorem{remark}{Remark}

\tikzstyle{server} = [rectangle, rounded corners, minimum width=6cm, minimum height=1.5cm, text centered, draw=green, fill=green!30, text width=5.5cm]
\tikzstyle{worker} = [rectangle, rounded corners, minimum width=2cm, minimum height=0.7cm, text centered, draw=blue, fill=blue!30]
\tikzstyle{ellipsis} = [draw, circle, minimum size=0.5cm]
\tikzstyle{arrow} = [thick,->,>=stealth]
\tikzstyle{highlight} = [ellipse, dashed, draw=black, minimum width=5cm, minimum height=1cm, yshift=-1cm]

\definecolor{senderblue}{RGB}{65, 105, 170}
\definecolor{senderlight}{RGB}{220, 235, 250}
\definecolor{receivergreen}{RGB}{46, 139, 87}
\definecolor{receiverlight}{RGB}{220, 245, 230}
\definecolor{channelgold}{RGB}{218, 165, 32}
\definecolor{channellight}{RGB}{255, 248, 220}
\definecolor{couponorange}{RGB}{230, 126, 34}
\definecolor{couponlight}{RGB}{253, 235, 208}
\definecolor{dangered}{RGB}{192, 57, 43}
\definecolor{successgreen}{RGB}{39, 174, 96}
\definecolor{bglight}{RGB}{252, 252, 254}

\renewcommand{\footnoterule}{%
  \kern -3pt
  \hrule width \columnwidth height 0.4pt
  \kern 2.6pt
}
\author[1]{Youssef Ahmed}
\author[2]{Arnob Ghosh}
\author[3]{Chih-Chun Wang}
\author[1,4]{Ness B. Shroff}

\affil[1]{Department of ECE, The Ohio State University, Columbus, OH, USA}
\affil[2]{Department of Electrical and Computer Engineering,
New Jersey Institute of Technology, Newark, NJ, USA}
\affil[3]{Elmore Family School of ECE, Purdue University, West Lafayette, IN, USA}
\affil[4]{Department of CSE, The Ohio State University, Columbus, OH, USA}

\affil[ ]{\textit{Emails:} ahmed.943@osu.edu, arnob.ghosh@njit.edu, chihw@purdue.edu, shroff.11@osu.edu}

\begin{document}
\title{Beyond Freshness and Semantics: A Coupon-Collector Framework for Effective Status Updates \thanks{
This work was supported in part by the National Science Foundation under 
Grants CNS-2107363, CCF-2309887, and ECCS-2418106, and by a gift from 
MediaTek USA. This work has also been supported in part by the Army 
Research Laboratory under Cooperative Agreement Number W911NF-23-2-0225, 
by the U.S. National Science Foundation under the grants: NSF AI Institute 
(AI-EDGE) 2112471, CNS-2312836, CNS-2225561, and CNS-2239677, and by the 
Office of Naval Research under grant N00014-24-1-2729. The views and 
conclusions contained in this document are those of the authors and should 
not be interpreted as representing the official policies, either expressed 
or implied, of the Army Research Laboratory or the U.S. Government. The 
U.S. Government is authorized to reproduce and distribute reprints for 
Government purposes notwithstanding any copyright notation herein.
}}
\maketitle
\begin{abstract}
For status update systems operating over unreliable energy-constrained wireless channels, we address Weaver's long-standing \emph{Level-C} question~\cite{weaver1953recent}: \emph{do my packets actually improve the plant's behavior?}. Each fresh sample carries a stochastic expiration time---governed by the plant's instability dynamics---after which the information becomes useless for control. Casting the problem as a coupon-collector variant with expiring coupons, we (i)~formulate a two-dimensional average-reward MDP, (ii)~prove that the optimal schedule is \emph{doubly thresholded} in the receiver's freshness timer and the sender's stored lifetime, (iii)~derive a closed-form policy for deterministic lifetimes, and (iv)~design a Structure-Aware Q-learning algorithm (SAQ) that learns the optimal policy without knowing the channel success probability or lifetime distribution. Simulations validate our theoretical predictions: SAQ matches optimal Value Iteration performance while converging significantly faster than baseline Q-learning, and expiration-aware scheduling achieves up to 50\% higher reward than age-based baselines by adapting transmissions to state-dependent urgency---thereby delivering Level-C effectiveness under tight resource constraints.
\end{abstract}
\section{Introduction}
Modern cyber-physical systems, such as mobile robots navigating busy spaces or learning agents steering autonomous vehicles, rely on timely state information that must traverse unreliable, energy-constrained wireless links~\cite{shi2016edge, zhou2019edge}. Sending every update guarantees the controller acts on the freshest data, but quickly drains battery life and clogs shared channels~\cite{raghunathan2002energy}. Skipping updates, however, conserves energy while forcing the controller to rely on stale information, increasing the risk of suboptimal or unsafe decisions. Existing frameworks frequently waste energy transmitting ``fresh but useless'' updates because they cannot identify when local knowledge suffices. They lack a measure of ``Level-C'' effectiveness \cite{weaver1953recent}, i.e., the ability to determine when a specific sample truly improves decision quality.

In contrast, computation costs have plummeted as edge devices benefit from continual advances in transistor scaling and specialized machine-learning accelerators \cite{sze2017efficient}. Modern hardware has shifted the energy bottleneck decisively: while performing a full neural-network inference on TinyML hardware now consumes as little as $1.9\,\mu\text{J}$ \cite{wu2023bulk}, transmitting a single data packet over a standard radio protocol can consume orders of magnitude more energy. For instance, transmitting just one bit requires roughly $0.015\,\mu\text{J}$. This means that sending a standard packet (approx.\ 1000 bits including overhead) consumes roughly $15\,\mu\text{J}$ \cite{raghunathan2002energy}. Hence, a device can execute an entire complex inference algorithm locally for significantly less energy than it takes to wirelessly report a single raw measurement. This disparity establishes communication, (not computation) as the primary bottleneck on system performance and lifetime. Using this computation power, one might do effective inference without relying on fresh information. Therefore, we need an intelligent policy that leverages this abundant computational capacity to filter data locally, transmitting updates only when they measurably improve control performance.

Given this asymmetry, the natural question is: how should a sensor decide when to transmit? Classically, this need for local filtering has been addressed by \emph{Event-Triggered Control (ETC)} \cite{tabuada2007event}, which trigger transmissions only when stability is threatened or error thresholds are crossed. However, these methods typically require rigid analytical models of the plant and channel. To enable more flexible, data-driven scheduling, the community has moved beyond the classic network–level objectives of end-to-end delay, throughput, and jitter. Instead, we now optimize \emph{freshness-centric performance metrics}, we refer to as \textbf{Information Timeliness Metrics (ITM)}. The flagship member is the \emph{Age of Information (AoI)} \cite{kaul2012real}, which measures, at any instant, the time elapsed since the newest successfully received sample was generated. While AoI captures data staleness directly, it is content-agnostic; simply minimizing age does not always guarantee control accuracy or efficient resource usage \cite{sun2017update}. This limitation has spurred extensions that refine timeliness by folding in task relevance. The \emph{Value-of-Information (VoI)}, for instance, lacks a single canonical definition; it is modeled either as a non-linear function of age (e.g., via exponential decay) \cite{8764465} or based on the intrinsic system dynamics to reflect how an update impacts state estimation \cite{ayan2019age,arafa2024age}. Similarly, the \emph{Age of Incorrect Information (AoII)} accumulates staleness only when the receiver's estimate deviates effectively from the true state \cite{maatouk2020age}, while the \emph{Age of Synchronization (AoSync)} tracks the time since the controller and plant last shared a fully synchronized state \cite{zhong2018two}.

However, timeliness metrics are often merely proxies for true operational goals. Consequently, researchers frequently optimize Control-Oriented Metrics directly, such as Mean Squared Error (MSE) in networked estimation \cite{sun2017remote}. Bridging timeliness and error, the \emph{Urgency of Information (UoI)} captures the weighted cost of estimation inaccuracy based on context-dependent factors \cite{zheng2020urgency}. Freshness concepts have also been extended to the control link via the \emph{Age of Actuation (AoA)} \cite{nikkhah2023age}, while the \emph{Age of Loop (AoL)} \cite{de2021age} unifies both directions by capturing end-to-end freshness across the closed loop---with downlink-initiated (DL-AoL) and uplink-initiated (UL-AoL) variants. Table~\ref{tab:metrics_map} summarizes these metrics alongside their corresponding Shannon--Weaver communication layers. Collectively, these approaches underscore that effective policies must balance raw freshness with actual estimation and decision-making quality.

\begin{table}[t]
\centering
\scriptsize
\caption{Key update metrics mapped to Shannon-Weaver layers. 
\textit{Symbols:} $n$: current slot; $U(n)$: generation time of latest received update; $\Delta(n)=n-U(n)$; $x_n, \hat{x}_n$: true/estimated states; $A(n)$: last action time; $\hat{t}_j$: freshest command time; $t_i$: status update time causing latest action; $S(n)$: last sync time; $U_{\mathrm{comp}}(n)$: generation time of latest computation result; $w(n)$: urgency weight; $C(\cdot)$: cost function; $r_k$: reward at age $k$; $\epsilon$: tolerance.}
\label{tab:metrics_map}
\renewcommand{\arraystretch}{1.25}
\begin{threeparttable}
\begin{tabularx}{\columnwidth}{
  >{\raggedright\arraybackslash}p{1.2cm}
  >{\raggedright\arraybackslash}X
  >{\centering\arraybackslash}p{1.0cm}
  >{\raggedright\arraybackslash}p{2.0cm}}
\toprule
\textbf{Metric} & \textbf{Definition} & \textbf{Level} & \textbf{Insight} \\
\midrule
AoI & $\Delta(n)=n-U(n)$ & A & Data freshness \\[2pt]
MSE & $\mathbb{E}[\|x_n-\hat{x}_n\|^2]$ & A & Estimation accuracy \\[2pt]
AoSync & $n-S(n)$ & A & Synchronization gap \\[2pt]
AoA & $n-A(n)$ & A & Actuation staleness \\[2pt]
AoL & $t-\hat{t}_j$; $t-t_i$ & A & Round-trip freshness \\[2pt]
AoC & $n - U_{\mathrm{comp}}(n)$ & A & Computation freshness \\[2pt]
VoI & $f(\Delta(n),x_n)$ & A/B\tnote{1} & Context-aware utility \\[2pt]
UoI & $w(n)\cdot C(x_n,\hat{x}_n)$ & B & Weighted inaccuracy cost \\[2pt]
AoII & $\sum_{k=U(n)}^{n}\mathds{1}\{x_k\neq\hat{x}_k\}$ & B & Error-aware staleness \\[2pt]
\textbf{Expiration (Ours)} & $\max\{k:r_0-r_k\leq\epsilon\}$ & C & Just-in-time scheduling \\
\bottomrule
\end{tabularx}
\begin{tablenotes}
\scriptsize
\item[1] Depends on VoI definition used.
\end{tablenotes}
\end{threeparttable}
\vspace{-2mm}
\end{table}
This shift from signal fidelity to functional performance places our work within the emerging paradigm of \textbf{Goal-Oriented Communication} (Level-C) \cite{strinati20216g, uysal2022semantic, li2024toward}. While classical information theory addresses \emph{Level A} (accurate symbol transmission) and Semantic Communication targets \emph{Level B} (meaning preservation), Goal-Oriented strategies focus on \emph{Level C}: ensuring that the received information successfully steers the system toward a specific objective. The defining feature of this level is that ``all information not strictly relevant to the fulfillment of the goal can be neglected.'' Our framework implements precisely this principle: by filtering updates based on their expiration time (a proxy for their relevance to the control objective) we discard technically correct but functionally irrelevant data, prioritizing the \emph{pragmatic value} of information over its mere existence.

Building on this principle, we develop a status-update framework that explicitly models how each transmission (or silence) influences the decision making process. In particular, we consider a sender-receiver pair in which a fresh sample arrives every time slot, but each sample is only \emph{useful} to the controller for a finite, random \emph{expiration time}. The sender must decide, under an unreliable channel and a non-negligible transmission cost, whether the prospective benefit of informing the controller outweighs the cost of sending the packet. The unreliable channel compounds this decision: stochastic packet losses create uncertainty about successful delivery, forcing the sender to weigh retransmitting a potentially stale packet against transmitting a fresher sample, while each attempt consumes resources regardless of outcome.

We develop the expiration framework in two phases. First, we characterize \emph{when observations expire}: \textbf{Section~\ref{ssec:vector-LQ}} derives expiration times analytically for linear systems, while \textbf{Section~\ref{ssec:cartpole-case}} estimates them via Monte Carlo rollouts for nonlinear systems (CartPole), demonstrating generalization beyond tractable dynamics. Both assume reliable channels to isolate the expiration phenomenon. Second, we address \emph{how to schedule under unreliable channels}: \textbf{Sections~\ref{sec:MDP}--\ref{sec:structural}} formulate the problem as a coupon-collector MDP, prove optimal policies are threshold-based, and develop structure-aware Q-learning for the model-free case. Our main contributions are:

\subsection*{Main Contributions}

\begin{enumerate}[label=\textbf{C\arabic*.}, leftmargin=*]
    \item  We close Weaver's \emph{Level-C} gap by attaching a finite, sample-specific \emph{expiration time} to every status update and asking when a transmission truly improves the controller's action quality. Two representative case studies (a linear Kalman-controlled plant with an MSE cap and a data-driven Remotely controlled Cartpole task) show how such expiration horizons arise analytically or via learning.
    \item 
          We recast scheduling as a coupon-collector problem with expiring coupons and formulate the decision process as a two-dimensional average-cost MDP. Using lattice monotonicity we prove the bias value is non-decreasing in both state coordinates; the action gap is super-modular, yielding a \emph{double-threshold} optimal policy. For deterministic lifetimes we derive a closed-form threshold rule.
    \item  For unknown, random lifetimes we design a model-free Q-learning algorithm that hard-codes the ``never-send-obsolete'' rule, confines exploration to the undecided band, and converges far faster than vanilla Q-learning.
\item Simulations demonstrate that expiration-aware scheduling achieves up to \(\mathbf{50\%}\) higher cumulative reward compared to periodic baselines at moderate communication costs, while SAQ matches optimal Value Iteration performance and converges significantly faster than baseline Q-learning.\end{enumerate}
\section{Characterizing Observation Expiration}
\label{sec:expiration}
Before addressing the scheduling problem under unreliable channels 
(Sections~\ref{sec:MDP}--\ref{sec:structural}), we must first answer 
a fundamental question: \emph{how long does an observation remain useful?} 
This section develops two complementary approaches to characterizing 
expiration time $T$: analytical derivation for linear systems 
(Section~\ref{ssec:vector-LQ}) and data-driven estimation for nonlinear 
systems (Section~\ref{ssec:cartpole-case}). Both assume a reliable channel 
($p_s = 1$), isolating the expiration phenomenon from transmission uncertainty.

%---------------------------------------------------------------
%  Case 1 : Vector Linear–Quadratic Control Loop
%---------------------------------------------------------------
%---------------------------------------------------------------
%===============================================================

\subsection{Linear control system with MSE-based expiration}
\label{ssec:vector-LQ}

\paragraph{Plant and sensor model.}
We study the discrete-time linear system
\begin{equation}\label{eq:plant}
\begin{aligned}
\mathbf x_{n+1} &= A\,\mathbf x_n + B\,u_n + \mathbf w_n,\\
\mathbf y_{n+1} &= H\,\mathbf x_{n+1} + \mathbf v_{n+1},
\end{aligned}
\end{equation}
where the state $\mathbf x_n\!\in\!\mathbb R^{d}$, control
$u_n\!\in\!\mathcal U\subseteq\mathbb R^{m}$, and measurement
$\mathbf y_{n+1}\!\in\!\mathbb R^{p}$.  
System matrices satisfy
$A\!\in\!\mathbb R^{d\times d}$,
$B\!\in\!\mathbb R^{d\times m}$,
$H\!\in\!\mathbb R^{p\times d}$.
The process noise is $\mathbf w_n\!\sim\!\mathcal N(0,Q)$ with $Q\succ 0$.
The measurement noise covariances $\{R_n\}_{n\ge 1}$ are drawn i.i.d.\ 
from a distribution $\mathcal R$ supported on positive-definite matrices, 
with $R_n$ revealed at the start of slot~$n$.

The sensor runs a standard Kalman filter, producing
$(\hat{\mathbf x}_{n+1|n+1},P_{n+1|n+1})$
at the end of slot~$n{+}1$.
If the estimate is \emph{transmitted} ($a_{n+1}=1$) it is copied
verbatim to the controller; otherwise ($a_{n+1}=0$) the controller
propagates its last estimate through the open-loop prediction
$\hat{\mathbf x}^{\mathrm c}_{n+1}=A\hat{\mathbf x}^{\mathrm c}_{n}+B\,u_n$,
$P^{\mathrm c}_{n+1}=AP^{\mathrm c}_{n}A^{\top}+Q$.
We write $P^{\mathrm c}_{n} \triangleq P_{n|m}$ for the controller's 
error covariance at slot~$n$ when the most recent transmission occurred 
at slot~$m$; between updates this covariance grows monotonically 
because of process noise and the unstable dynamics.
The control law is a fixed map
$u_n=\mu\!\bigl(\hat{\mathbf x}^{\mathrm c}_{n},P^{\mathrm c}_{n}\bigr)$.

\paragraph{Time-varying safety criterion.}
The controller cannot access the \emph{true} state; it only sees an
\emph{estimate} corrupted by error covariance $P^{\mathrm c}_{n}$.
We consider a time-varying tolerance sequence $\{\tau_n\}_{n \geq 0}$ 
with $\tau_n > 0$ and require
\begin{equation}\label{eq:MSEcap}
    \tr\!\bigl(P^{\mathrm c}_{n}\bigr)\;\le\;\tau_n, \qquad \forall\, n,
\end{equation}
which upper-bounds the instantaneous mean-squared error the controller 
is willing to bear.

\begin{assumption}[Stationary thresholds]
\label{as:predictable-tau}
The threshold process $\{\tau_n\}_{n \geq 0}$ is stationary and ergodic, 
independent of the measurement noise $\{R_n\}$, and fully known to the 
sensor (i.e., the entire realization $\{\tau_n\}_{n \geq 0}$ is 
available at time~$0$).\footnote{This holds, e.g., for pre-announced 
mission profiles or periodic task structures whose schedule is fixed 
before deployment.}
\end{assumption}

%\begin{assumption}[Stochastic measurement noise]
%\label{as:iid-noise}
%The measurement noise covariances $\{R_n\}_{n \geq 1}$ are drawn i.i.d.\ 
%from a distribution $\mathcal{R}$ supported on positive definite matrices, 
%with $R_n$ revealed at the start of slot $n$.
%\end{assumption}

\begin{assumption}[Feasibility]
\label{as:feasibility}
For all $n$, $\tr(P_{n|n}) \leq \tau_n$ almost surely. That is, 
transmitting always satisfies the instantaneous constraint.
\end{assumption}

At the start of slot $n$, the noise covariance $R_n$ is revealed to 
the sensor. The sensor computes $(P_{n|n-1}, P_{n|n})$ and selects 
$a_n\in \{0,1\}$. The sensor seeks a causal policy $\pi = \{a_n\}$ 
maximising the long-run expected reward:
\begin{equation}\label{eq:objective}
\max_{\{a_n\}}
\liminf_{N\to\infty}\frac{1}{N}
\mathbb{E}\left[\sum_{n=0}^{N-1}
\Bigl[\mathds{1}\{\tr(P^{\mathrm c}_{n})\le\tau_n\}-\beta c\,a_n\Bigr]
\,\Big|\, P_{0|0}\right].
\end{equation}
The first term awards a unit reward at every slot in which the 
constraint $\tr(P^{\mathrm{c}}_{n}) \le \tau_n$ is met, so the 
objective measures the long-run fraction of time the controller 
operates within its error budget minus a transmission penalty.
The weight $\beta > 0$ trades energy against performance: large 
$\beta$ discourages communication, while $\beta \to 0$ recovers the 
``always transmit'' policy.

Our formulation departs from the standard remote estimation paradigm in 
three fundamental respects. \emph{First}, we model measurement noise as time-varying ($R_n$ rather 
than constant $R$), capturing realistic scenarios where sensor fidelity 
depends on environmental conditions, operating modes, or adaptive power 
management. This generalization is absent from the canonical works 
\cite{sinopoli2004kalman, schenato2007foundations, shi2011sensor, 
leong2017transmission, chakravorty2020remote}.
\emph{Second}, we allow the MSE tolerance $\tau_n$ to vary 
stochastically according to a stationary process, rather than enforcing 
a static constraint throughout operation.
\emph{Third}, we optimize the 
\emph{constraint-satisfaction indicator}
$\mathds{1}\{\mathrm{tr}(P_n^{\mathrm{c}}) \leq \tau_n\}$ 
rather than the MSE itself. 
While prior work minimizes aggregate estimation cost 
$\mathbb{E}[\sum_n \mathrm{tr}(P_n^{\mathrm{c}})]$ 
\cite{xu2004optimal, lipsa2011remote} 
or penalizes large errors through convex surrogates 
\cite{leong2020deep}, 
our objective maximizes the \emph{fraction of time} the controller operates within 
its acceptable-accuracy regime, subject to communication costs.
This binary formulation directly reflects safety-critical requirements 
(e.g., ``maintain localization accuracy 95\% of the time'') and 
naturally gives rise to the \emph{expiration time} abstraction: 
each sample remains \emph{valid} precisely until the constraint 
would be violated, after which it contributes zero reward regardless 
of its residual estimation quality.
\begin{definition}[MSE-based expiration time]
\label{def:expiration}
Given a transmission at slot $m$ with post-update covariance $P_{m|m}$, 
the \emph{expiration time} is
\begin{equation}\label{eq:expiration-time}
    T^{\mathrm{MSE}}_m \triangleq \min\bigl\{k \geq 1 : 
    \tr(P_{m+k|m}) > \tau_{m+k}\bigr\},
\end{equation}
with the convention $\min \varnothing = +\infty$.
\end{definition}

At slot $n$, let $m = \max\{k < n : a_k = 1\}$ denote the most recent 
transmission epoch. The \emph{residual validity} 
$\Delta_n \triangleq T^{\mathrm{MSE}}_m - (n - m)$ measures remaining 
slots before the current estimate expires. A causal policy $\pi$ is 
\emph{admissible} if it maintains $\Delta_n \geq 1$ for all $n$, 
ensuring the constraint \eqref{eq:MSEcap} holds almost surely.

The \emph{just-in-time (JIT) policy} $\pi^{\mathrm{JIT}}$ transmits 
exactly when residual validity reaches one: 
$a_n^{\mathrm{JIT}} = \mathds{1}\{\Delta_n = 1\}$.
The following theorem establishes that this elementary policy is 
optimal among all causal policies.

\begin{theorem}[Optimality of JIT policy]
\label{thm:optimal-policy}
Under Assumptions~\ref{as:predictable-tau}--\ref{as:feasibility}, 
if the transmission cost satisfies 
$\beta c \leq 1$ and $\beta c < \bar{T}$, 
where $\bar{T} \triangleq \lim_{K \to \infty} \frac{1}{K} 
\sum_{k=1}^{K} T^{\mathrm{MSE}}_{m_k}$ is the limiting average 
expiration time along the JIT transmission epochs $\{m_k\}$,
then the just-in-time policy is optimal among 
all causal policies (both admissible and non-admissible), achieving
\begin{equation}\label{eq:jit-value}
    J(\pi^{\mathrm{JIT}}) = 1 - \frac{\beta c}{\bar{T}}.
\end{equation}
\end{theorem}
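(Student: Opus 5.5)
We establish the value \eqref{eq:jit-value} by a renewal argument, and then show that no causal policy can beat it by comparing any policy, slot by slot, with JIT.

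\emph{Step 1 (value of JIT).} Under $\pi^{\mathrm{JIT}}$ the transmission epochs $m_1<m_2<\dots$ satisfy $m_{k+1}=m_k+T^{\mathrm{MSE}}_{m_k}-1$, or $m_k+T^{\mathrm{MSE}}_{m_k}$ depending on how the boundary slot is indexed. We will fix the indexing so that each inter-transmission interval has length exactly $T^{\mathrm{MSE}}_{m_k}$. Assumption~\ref{as:feasibility} gives $\tr(P_{m_k|m_k})\le\tau_{m_k}$, and Definition~\ref{def:expiration} gives $\tr(P_{n|m_k})\le\tau_n$ for every slot of the interval. So every slot earns reward $1$, while exactly one cost $\beta c$ is paid per interval. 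The reward up to $m_K$ is therefore $m_K-K\beta c$ with $m_K\approx\sum_{k<K}T^{\mathrm{MSE}}_{m_k}$. Dividing by $m_K$ and using the definition of $\bar T$ gives $1-K\beta c/m_K\to 1-\beta c/\bar T$. Stationarity and ergodicity (Assumption~\ref{as:predictable-tau}) together with the i.i.d.\ $R_n$ ensure the limit exists almost surely. Bounded rewards then let us pass from the pathwise limit to the expectation and the $\liminf$.

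\emph{Step 2 (upper bound for arbitrary policies).} We split a general causal policy $\pi$ into maximal renewal blocks that start at its transmissions. Take a block starting at a transmission at slot $m$ and lasting until the next transmission at $m+L$. In that block $\pi$ earns at most $\min(L,T^{\mathrm{MSE}}_m)-\beta c$. We then use two comparisons.
\begin{itemize}
\item If $L>T^{\mathrm{MSE}}_m$, the $L-T^{\mathrm{MSE}}_m$ expired slots earn $0$. Inserting an extra transmission at expiry costs $\beta c\le 1$ but restores at least one reward unit per slot. So non-admissible idling is weakly dominated; this is where $\beta c\le1$ is used.
\item If $L<T^{\mathrm{MSE}}_m$, the policy transmits early and pays $\beta c$ more often than necessary.
\end{itemize}
To turn this into a bound, we will define a per-slot accounting. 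Reward is at most $1$ per slot, so $\pi$ scores at most $1-\beta c\cdot(\text{transmission rate})$ on admissible stretches. Dominance then reduces the question to showing that any admissible policy has long-run transmission rate at least $1/\bar T$.

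\emph{Step 3 (main obstacle).} The hardest step is showing that JIT minimizes the long-run transmission rate among admissible policies. The difficulty is that transmitting earlier or later changes the future expiration times, since $T^{\mathrm{MSE}}_m$ depends on $R_m$ and on the $\tau$-path starting at $m$. So the interval lengths are not exchangeable across policies.

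The first approach we will try is a pathwise interchange (greedy-stays-ahead) argument. We claim that the $k$-th JIT transmission epoch is always at or after the $k$-th transmission of any admissible $\pi$, which we would prove by induction on $k$. For this we need a monotonicity property: a later update never expires earlier than an earlier one, i.e.\ $m\le m'$ implies $m+T^{\mathrm{MSE}}_m\le m'+T^{\mathrm{MSE}}_{m'}$. This would follow if $P_{m'|m'}\preceq P_{m'|m}$, which holds because the Kalman update conditions on more data and the Riccati prediction map is monotone.

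There is a subtlety: $R_{m'}$ is random, so the posterior $P_{m'|m'}$ comes from filtering and is not simply the propagated $P_{m'|m}$. We will handle this by noting that the sensor's filter runs continuously regardless of transmissions, so $P_{n|n}$ is policy-independent. Transmissions only affect the controller's copy, and the monotonicity comparison is therefore valid pathwise. Given the induction, $N_\pi(n)\ge N_{\mathrm{JIT}}(n)$ for all $n$, which yields the rate bound.

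Finally, the condition $\beta c<\bar T$ ensures $J(\pi^{\mathrm{JIT}})>0$. This beats the never-transmit policy, which earns at most the $T^{\mathrm{MSE}}_0/N\to0$ contribution, and completes the comparison.
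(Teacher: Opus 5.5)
Your proposal follows the paper's proof essentially step for step, including your fix of the JIT indexing so that each cycle has length $T^{\mathrm{MSE}}_{m_k}$. The paper likewise reduces non-admissible policies to admissible ones by inserting transmissions, using $\beta c\le 1$. It proves $m_k^{\pi}\le m_k^{\mathrm{JIT}}$ by induction from the ordering $m+T^{\mathrm{MSE}}_m\le n+T^{\mathrm{MSE}}_n$ for $m<n$. That ordering comes from $P_{n+k|n}\preceq P_{n+k|m}$, with the sensor's filter running independently of transmissions, exactly as you propose. The paper then gets the JIT value by cycle accounting. Your pathwise computation from the definition of $\bar T$ is a bit more appropriate than the paper's appeal to the renewal-reward theorem, because the cycles $T^{\mathrm{MSE}}_{m_k}$ are not i.i.d.\ when $\{\tau_n\}$ is merely stationary ergodic.

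One step of your Step~2 fails as written. $T^{\mathrm{MSE}}_m$ is a \emph{first}-violation time, and $\tau_n$ is time-varying (e.g.\ periodic). So the cap $\tr(P_{n|m})\le\tau_n$ can be met again at slots after $m+T^{\mathrm{MSE}}_m$. The ``expired'' slots therefore need not earn $0$, and $\min(L,T^{\mathrm{MSE}}_m)-\beta c$ is not an upper bound on what $\pi$ earns in a block. The same objection applies to your claim that never-transmit earns asymptotically nothing. That sentence is unnecessary anyway: never-transmit is non-admissible and is already covered by the reduction.

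The weak-dominance conclusion is still true, but it has to be argued slot by slot, as the paper does. Let $\pi'$ copy $\pi$ and additionally transmit whenever its own controller covariance would violate the cap. Then $\pi'$'s most recent epoch is never earlier than $\pi$'s, so covariance dominance gives $P^{\mathrm c}_n(\pi')\preceq P^{\mathrm c}_n(\pi)$. The slot-by-slot comparison then goes through:
\begin{itemize}
\item At every slot where $\pi$ meets the cap, $\pi'$ takes the same action and also meets it.
\item At each inserted transmission, $\pi$ is necessarily in violation, so $\pi'$ earns $1-\beta c\ge 0$ against $\pi$'s $0$.
\item Wherever $\pi$ violates without an insertion, $\pi'$ earns $1$ against $0$.
\end{itemize}
The fact that an inserted transmission never costs reward at \emph{later} slots is precisely the monotonicity you develop in Step~3. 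It must already be invoked in Step~2.
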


\begin{proof}
See Appendix~\ref{app:proof-optimal-policy}.
\end{proof}

\begin{remark}[Interpretation of the conditions]
\label{rem:conditions}
Because $R_n$ is i.i.d.\ and $\{\tau_n\}$ is stationary 
(Assumption~\ref{as:predictable-tau}), the Kalman filter reaches a 
stationary regime in which $T^{\mathrm{MSE}}_m$ has the same 
marginal distribution at every epoch~$m$.  Hence $\bar{T} = 
\mathbb{E}[T^{\mathrm{MSE}}_m]$, and the condition $\beta c < \bar{T}$ 
simply requires that the cost of one transmission be less than the 
mean number of reward-earning slots it provides.  The condition 
$\beta c \leq 1$ ensures that a policy can never benefit from 
deliberately letting the constraint be violated: every slot that 
violates \eqref{eq:MSEcap} forfeits reward~$1$, while the 
saved transmission cost is at most $\beta c \leq 1$, so the 
trade-off is never favorable.
\end{remark}

\begin{remark}
The JIT policy extracts maximum value from each transmission by 
consuming its entire validity window. Any policy that transmits 
earlier ``wastes'' validity, while any policy that transmits later 
violates the constraint.
Section~\ref{ssec:cartpole-case} estimates expiration times for a 
nonlinear system where no closed form exists.
\end{remark}
\subsection{Case Study 2: Data-Driven Expiration (Deep RL CartPole)}
\label{ssec:cartpole-case}

Many systems of practical interest are nonlinear and lack closed-form 
covariance recursions.  To show that expiration times can still be 
obtained, we study the CartPole balancing 
problem~\cite{barto1983neuronlike}, whose unstable dynamics share 
structural features with rocket landing~\cite{blackmore2010minimum} 
and self-balancing platforms~\cite{grasser2002joe}.  It is frequently 
used to benchmark information-timeliness 
metrics~\cite{zheng2020urgency,de2021age}.

A recurrent PPO controller tracks a moving reference while receiving 
noisy, potentially stale observations from a remote sensor (full 
setup in Appendix~\ref{app:cartpole-details}).  Transmissions are 
reliable ($p_s = 1$) and cost~$c$ each.

Fig.~\ref{fig:combined_performance} quantifies the joint effect of 
observation age and measurement noise on controller performance.  
In the noise-free regime ($\sigma_{\max}=0$), the controller 
maintains 100\% survival for observation ages up to 5~steps 
(Fig.~\ref{fig:survival_heatmap}).  This robustness degrades rapidly 
with noise: at $\sigma_{\max}=0.5$, survival drops to 85\% after 
one step of latency and to 22\% by age~3.  The reward surface 
(Fig.~\ref{fig:reward_surface}) is stable along either axis 
(low noise \emph{or} low age) but collapses when both are elevated.  
This asymmetry motivates state-dependent scheduling: transmissions 
can be deferred during benign conditions but become urgent near the 
degradation boundary.

\begin{figure}[htbp]
    \centering
    \begin{subfigure}[b]{\columnwidth}
        \includegraphics[width=\textwidth]{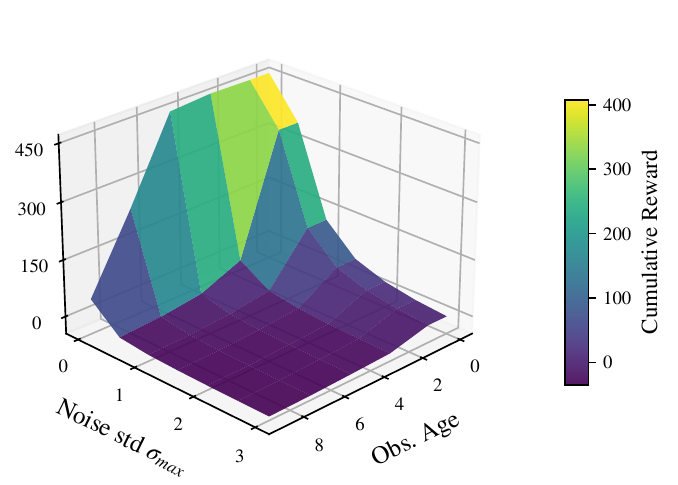}
        \caption{Reward surface}
        \label{fig:reward_surface}
    \end{subfigure}
    \hfill
    \begin{subfigure}[b]{\columnwidth}
        \includegraphics[width=\textwidth]{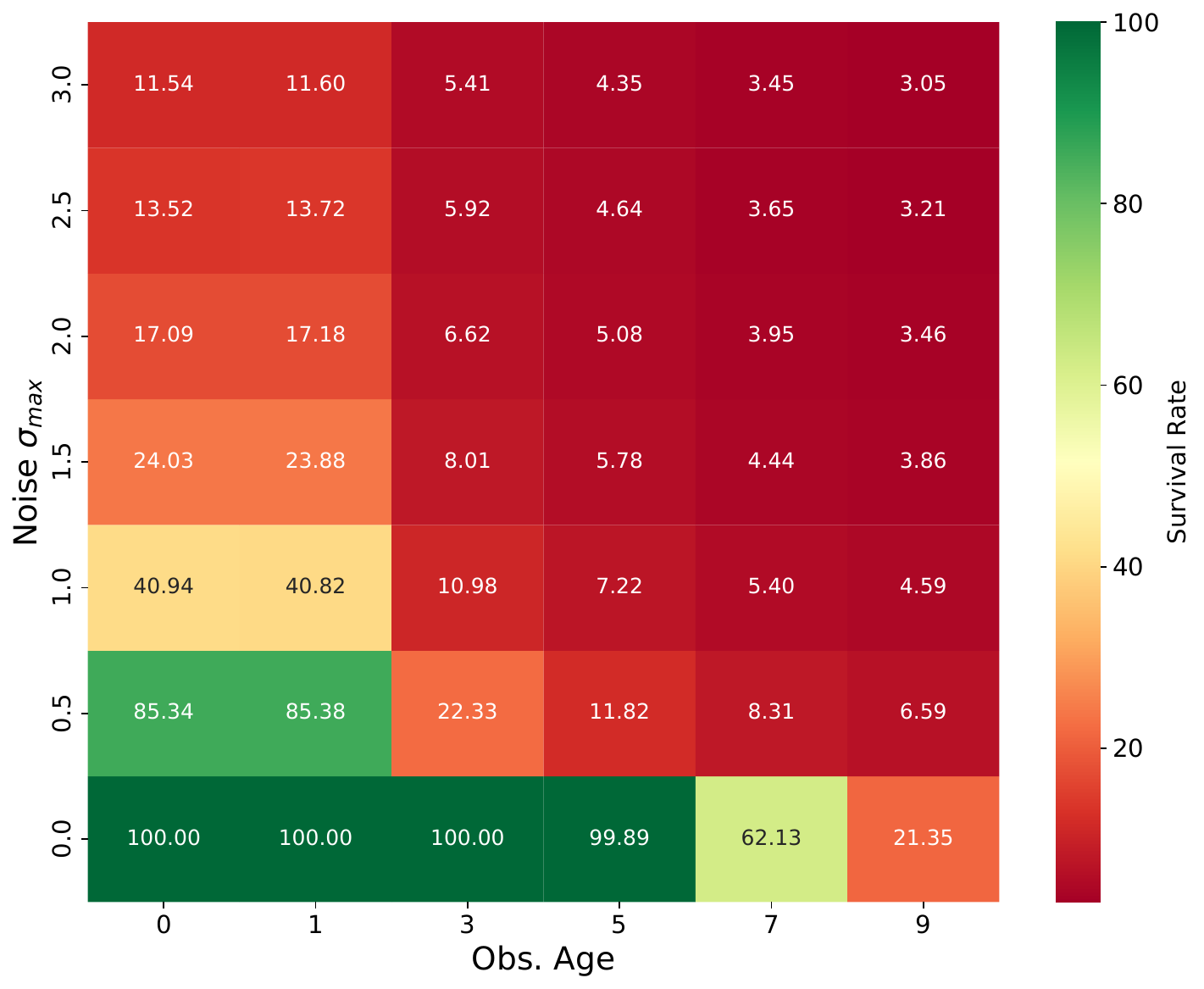}
        \caption{Survival rate (\%)}
        \label{fig:survival_heatmap}
    \end{subfigure}
    \caption{Impact of observation age and noise $\sigma_{\max}$ on 
    controller performance.}
    \label{fig:combined_performance}
\end{figure}

\paragraph{Expiration time.}
Let $\{\mathbf{x}^{(k)}\}_{k=0}^{K}$ be the \emph{stale-observation 
trajectory} obtained when the controller reuses observation 
$\mathbf{y}_n$ indefinitely:
\begin{equation}\label{eq:stale-trajectory}
 \mathbf{x}^{(k+1)} = f(\mathbf{x}^{(k)}, u^{(k)}, w^{(k)}).
\end{equation}
The \emph{expiration time} is the longest reuse horizon before the 
reward drops by more than~$\epsilon$:
\begin{equation}\label{eq:tau-star}
    T(\mathbf{y}_n; \epsilon) = \max\Bigl\{ k \in \{0, \ldots, K\} : r_0 - r_k \le \epsilon \Bigr\},
\end{equation}
where $r_k = r(\mathbf{x}^{(k+1)}, t_{n+k})$.  The expiration time 
depends on two factors: the system state (which determines proximity 
to instability boundaries) and the measurement noise (which governs 
observation reliability).  A feedforward 
network trained on rollout data predicts $T(\mathbf{y}_n;\epsilon)$ 
at deployment (details in Appendix~\ref{app:cartpole-details}).

\begin{remark}[On the tolerance parameters]
\label{rem:not-artificial}
Unlike AoI, which depends only on timestamps, the expiration time in 
\eqref{eq:tau-star} (and its linear counterpart \eqref{eq:expiration-time}) 
involves a tolerance parameter ($\epsilon$ or $\tau_n$).  These are 
not free design knobs: $\tau_n$ is set by the controller's 
maximum acceptable MSE, and $\epsilon$ by the reward loss the 
CartPole system can sustain before diverging.  Both are 
determined by the plant dynamics and the control objective.  
The expiration time is therefore grounded in the physics of 
the plant: it measures how long a sample remains useful 
\emph{for the task at hand}, not merely ``fresh'' in the 
AoI sense.
\end{remark}

\begin{figure}[htbp]
    \centering
    \begin{subfigure}[b]{\columnwidth}
        \includegraphics[width=.9\textwidth]{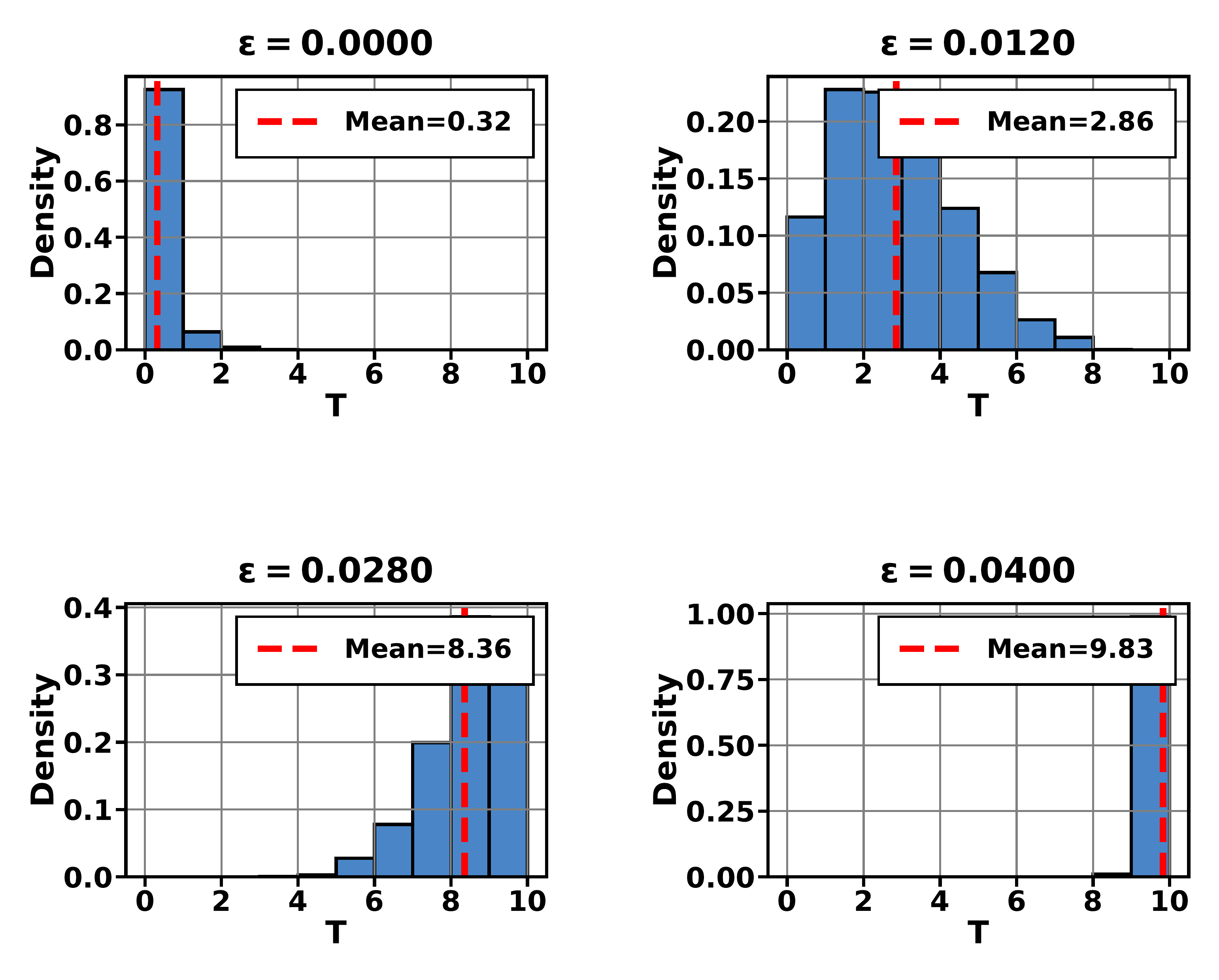}
        \caption{Distribution of $T$ for varying $\epsilon$}
        \label{fig:histograms}
    \end{subfigure}
    \hfill
    \begin{subfigure}[b]{.9\columnwidth}
        \includegraphics[width=\textwidth]{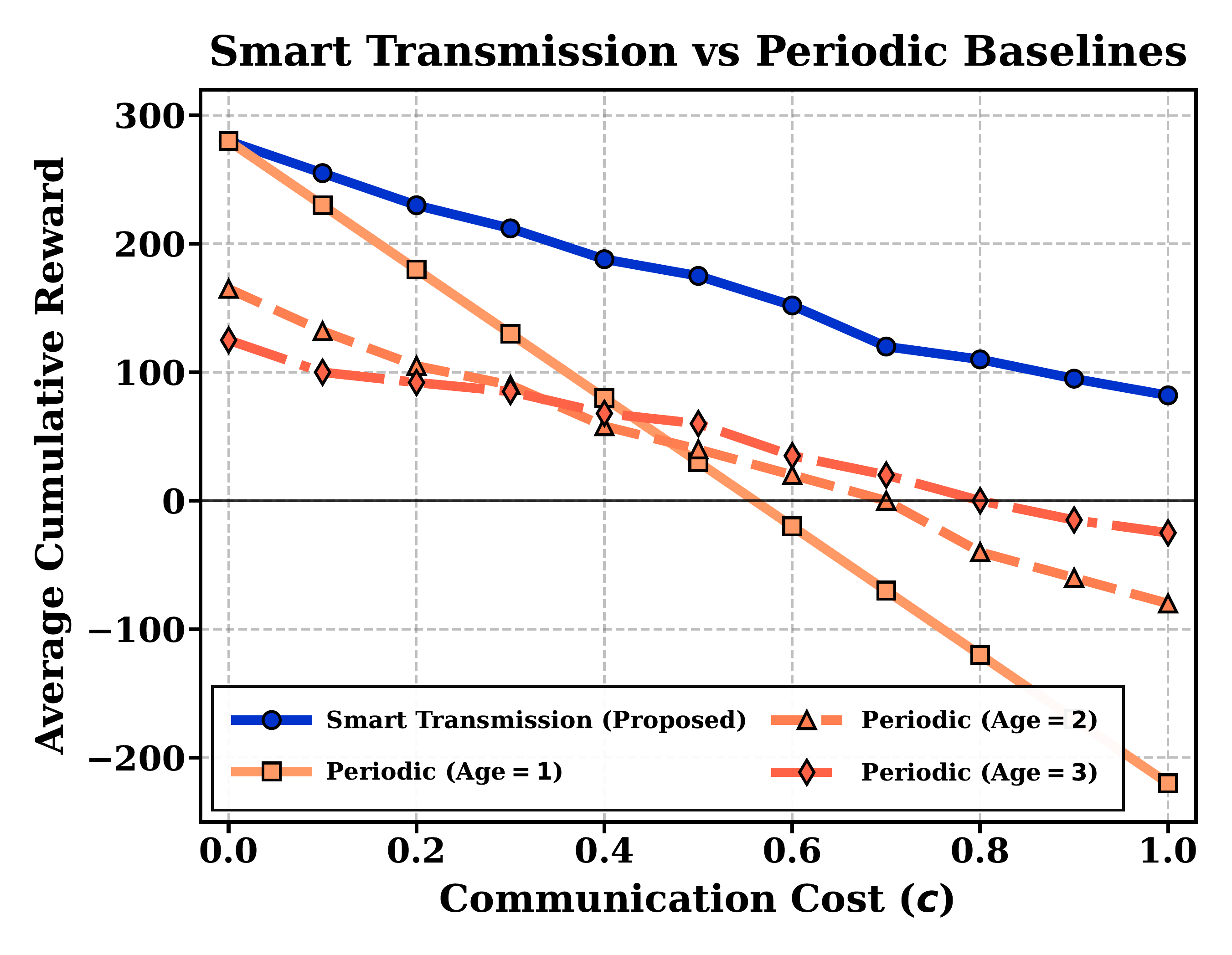}
        \caption{Reward vs.\ communication cost}
        \label{fig:cr_plot}
    \end{subfigure}
    \caption{\textbf{Expiration Analysis and Scheduling Performance.}}
    \label{fig:results_analysis}
\end{figure}

Fig.~\ref{fig:histograms} shows that the expiration time is 
state-dependent: even for a fixed~$\epsilon$, it ranges from 
$T = 1$ to $T > 10$ depending on the instantaneous state and 
noise level.  Increasing~$\epsilon$ shifts the distribution 
rightward, trading tracking precision for fewer transmissions.

At each step the scheduler compares the predicted lifetime 
$\hat{T}_n$ of the new observation against the remaining 
lifetime $T^{\mathrm{rem}}_{n-1}$ of the currently held 
sample and keeps whichever lasts longer.  Transmission occurs 
when the selected sample expires.  For each cost~$c$, the 
tolerance~$\epsilon$ is chosen to maximize cumulative reward.
Fig.~\ref{fig:cr_plot} shows that the proposed method achieves 
nearly $50\%$ higher reward than periodic baselines at moderate 
costs ($c \approx 0.4$): the scheduler defers transmissions 
when lifetimes are long and transmits promptly when noise or 
maneuvering shortens validity.
This gap follows from the structure in 
Fig.~\ref{fig:survival_heatmap}: the dynamic scheduler avoids 
redundant transmissions during benign conditions, reserving budget 
for moments when noise or maneuvering threaten stability.  Periodic 
policies, by contrast, waste resources during safe intervals while 
under-communicating during high-risk phases.

We have characterized observation expiration both analytically 
and numerically via learned predictors.
\paragraph{Common structure.}
In both case studies the expiration time measures the same thing: how 
many slots a sample can be reused before the controller's performance 
drops below a prescribed tolerance.  The tolerance is $\tau_n$ (MSE 
cap) in the linear case and $\epsilon$ (reward degradation bound) 
in CartPole; the expiration time is computed from 
\eqref{eq:expiration-time} and \eqref{eq:tau-star}, respectively.  
The coupon-collector MDP formulated next takes only the resulting 
distribution~$p_T$ as input, so its structural results 
(Sections~\ref{sec:MDP}--\ref{sec:structural}) apply to any plant 
that produces such a distribution.

Under reliable communication, the 
optimal policy is simple: transmit \emph{just-in-time}, precisely when 
the current observation expires. This holds regardless of the incoming 
sample's expiration time.

Unreliable channels break this structure. When transmissions may fail, 
the scheduler faces a gambling problem: should it send now, risking 
packet loss, or wait for a potentially better sample? The \emph{distribution} 
of expiration times and the \emph{value} of the current sample now enter 
the decision. In the next section, we formalize this 
as a coupon collector problem, where a sender holds samples of heterogeneous 
value and must decide when to transmit across an unreliable channel to 
a receiver who requires timely information.

%===============================================================
\section{Coupon-Collector View and MDP Formulation}
Having characterized expiration times, we now formulate the unreliable-channel scheduling problem as an MDP. We adopt a coupon-collector perspective where samples are expiring coupons shipped over a lossy channel, then derive the Bellman equation governing optimal transmission decisions.
\label{sec:MDP}

Note that the expiration time is computed entirely at the sensor: 
$T^{\mathrm{MSE}}_m$ (Definition~\ref{def:expiration}) depends on 
the Kalman prediction covariance, which evolves independently of 
whether the packet reaches the controller.  Similarly, 
$T(\mathbf{y}_n;\epsilon)$ in CartPole is evaluated by rolling out 
the stale-observation trajectory at the sensor.  The channel 
success probability~$p_s$ therefore enters the MDP only through 
the transition of the receiver timer~$T_r$, keeping the state 
space two-dimensional.
%===============================================================
\begin{figure}[t]
    \centering
    % \resizebox forces the tikzpicture to fit exactly within the column width
    \resizebox{\columnwidth}{!}{%
    \begin{tikzpicture}[
        % Removed fixed scale=0.78 so resizebox handles the sizing
        >=Stealth,
        % Timer badge
        timer/.style={
            circle,
            draw=#1!80!black,
            thick,
            top color=white,
            bottom color=#1!35,
            minimum size=0.5cm,
            font=\sffamily\scriptsize\bfseries, % Increased font
            inner sep=0.5pt
        },
        timer/.default=senderblue,
        % Action box
        actionbox/.style={
            draw=gray!70, 
            thick, 
            fill=white, 
            rounded corners=3pt,
            minimum width=0.9cm, 
            minimum height=0.6cm
        },
        % Info box
        infobox/.style={
            draw=#1!60, 
            thick, 
            rounded corners=4pt, 
            fill=#1!8,
            inner sep=5pt
        },
        infobox/.default=gray,
        % Flow arrow style
        flowarrow/.style={
            -{Stealth[length=2.2mm, width=1.6mm]},
            line width=1.4pt,
            #1,
            shorten >=1pt,
            shorten <=1pt
        },
        curvedarrow/.style={
            -{Stealth[length=2mm, width=1.4mm]},
            line width=1.2pt,
            #1,
            shorten >=1pt,
            shorten <=1pt
        }
    ]
    
        % ================================================================
        %                    COUPON SHAPE DEFINITIONS
        % ================================================================
        \newcommand{\drawcoupon}[4]{% x, y, main color, secondary color
            \begin{scope}[shift={(#1,#2)}]
                % Shadow
                \fill[black, opacity=0.08, rounded corners=2pt] 
                    (-0.7+0.02,-0.38-0.02) rectangle (0.7+0.02,0.38-0.02);
                % Main body
                \fill[#3!15, rounded corners=2pt] (-0.7,-0.38) rectangle (0.7,0.38);
                \shade[top color=#3!8, bottom color=#3!28, rounded corners=2pt] 
                    (-0.7,-0.38) rectangle (0.7,0.38);
                \draw[#4!80!black, line width=0.8pt, rounded corners=2pt] 
                    (-0.7,-0.38) rectangle (0.7,0.38);
                % Scalloped edges
                \foreach \y in {-0.24,-0.12,0,0.12,0.24} {
                    \fill[bglight] (-0.7,\y) circle (0.045);
                    \draw[#4!60!black, line width=0.4pt] (-0.7,\y) circle (0.045);
                    \fill[bglight] (0.7,\y) circle (0.045);
                    \draw[#4!60!black, line width=0.4pt] (0.7,\y) circle (0.045);
                }
                % Dashed lines
                \draw[#4!40, line width=0.5pt, dashed] (-0.45,-0.28) -- (-0.45,0.28);
                \draw[#4!40, line width=0.5pt, dashed] (0.45,-0.28) -- (0.45,0.28);
                % Stars
                \node[font=\fontsize{6}{6}\selectfont, text=#4!60!black] at (-0.57,0.22) {$\star$};
                \node[font=\fontsize{6}{6}\selectfont, text=#4!60!black] at (0.57,0.22) {$\star$};
                \node[font=\fontsize{6}{6}\selectfont, text=#4!60!black] at (-0.57,-0.22) {$\star$};
                \node[font=\fontsize{6}{6}\selectfont, text=#4!60!black] at (0.57,-0.22) {$\star$};
            \end{scope}
        }
         
        \newcommand{\drawsuccesscoupon}[2]{% x, y
            \begin{scope}[shift={(#1,#2)}]
                \fill[black, opacity=0.08, rounded corners=2pt] 
                    (-0.7+0.02,-0.38-0.02) rectangle (0.7+0.02,0.38-0.02);
                \fill[successgreen!12, rounded corners=2pt] (-0.7,-0.38) rectangle (0.7,0.38);
                \shade[top color=successgreen!5, bottom color=successgreen!22, rounded corners=2pt] 
                    (-0.7,-0.38) rectangle (0.7,0.38);
                \draw[successgreen!70!black, line width=0.8pt, rounded corners=2pt] 
                    (-0.7,-0.38) rectangle (0.7,0.38);
                \foreach \y in {-0.24,-0.12,0,0.12,0.24} {
                    \fill[bglight] (-0.7,\y) circle (0.045);
                    \draw[successgreen!50!black, line width=0.4pt] (-0.7,\y) circle (0.045);
                    \fill[bglight] (0.7,\y) circle (0.045);
                    \draw[successgreen!50!black, line width=0.4pt] (0.7,\y) circle (0.045);
                }
                \draw[successgreen!35, line width=0.5pt, dashed] (-0.45,-0.28) -- (-0.45,0.28);
                \draw[successgreen!35, line width=0.5pt, dashed] (0.45,-0.28) -- (0.45,0.28);
                \node[font=\fontsize{6}{6}\selectfont, text=successgreen!60!black] at (-0.57,0.22) {$\star$};
                \node[font=\fontsize{6}{6}\selectfont, text=successgreen!60!black] at (0.57,0.22) {$\star$};
                \node[font=\fontsize{6}{6}\selectfont, text=successgreen!60!black] at (-0.57,-0.22) {$\star$};
                \node[font=\fontsize{6}{6}\selectfont, text=successgreen!60!black] at (0.57,-0.22) {$\star$};
            \end{scope}
        }
         
        \newcommand{\drawexpiredcoupon}[2]{% x, y
            \begin{scope}[shift={(#1,#2)}]
                \fill[gray!12, rounded corners=2pt] (-0.7,-0.38) rectangle (0.7,0.38);
                \draw[gray!45, line width=0.8pt, rounded corners=2pt] 
                    (-0.7,-0.38) rectangle (0.7,0.38);
                \foreach \y in {-0.24,-0.12,0,0.12,0.24} {
                    \fill[bglight] (-0.7,\y) circle (0.045);
                    \draw[gray!35, line width=0.4pt] (-0.7,\y) circle (0.045);
                    \fill[bglight] (0.7,\y) circle (0.045);
                    \draw[gray!35, line width=0.4pt] (0.7,\y) circle (0.045);
                }
                % X mark
                \draw[dangered!65, line width=1.5pt] (-0.22,-0.13) -- (0.22,0.13);
                \draw[dangered!65, line width=1.5pt] (-0.22,0.13) -- (0.22,-0.13);
            \end{scope}
        }
    
        % Background
        \begin{scope}[on background layer]
            \fill[bglight, rounded corners=8pt] (-3.6,-4.2) rectangle (11.5,3.2);
        \end{scope}
    
        % ================================================================
        %                          TITLE
        % ================================================================
        \node[font=\sffamily\large\bfseries, text=senderblue!85!black] 
            at (5, 2.85) {Coupon-Collector System with Expiry};
    
        % ================================================================
        %                    FRESH COUPON ARRIVAL (TOP LEFT)
        % ================================================================
        \begin{scope}[shift={(-2.2, 1.6)}]
            % Glow
            \fill[yellow!35, opacity=0.35, rounded corners=4pt] 
                (-0.85,-0.52) rectangle (0.85,0.58);
            % Coupon
            \drawcoupon{0}{0}{yellow}{couponorange}
            % NEW label
            \node[font=\sffamily\fontsize{7}{7}\selectfont\bfseries, text=couponorange!90!black] at (0,0) {NEW!};
            % Timer
            \node[timer=channelgold] at (0.78,0.45) {$\tilde{T}$};
            % Labels
            \node[font=\sffamily\fontsize{7}{7}\selectfont, text=senderblue!80!black, align=center] 
                at (0,0.85) {Arrives each slot};
            \node[font=\sffamily\fontsize{7}{7}\selectfont] at (0,-0.6) {$\tilde{T}_s \sim p_T$};
        \end{scope}
    
        % Arrow: New coupon to Sender stack
        \draw[curvedarrow=senderblue!70] 
            (-1.35, 1.2) to[out=-50, in=120] (-0.9, 0.25);
    
        % ================================================================
        %                          SENDER (SHOPKEEPER)
        % ================================================================
        \begin{scope}[shift={(0.6,0)}]
            % Shadow
            \fill[black, opacity=0.08, rounded corners=2pt] 
                (-0.32,-1.42) -- (-0.40,-0.28) -- (0.44,-0.28) -- (0.36,-1.42) -- cycle;
            % Body
            \fill[senderblue!90!black] (-0.3,-1.38) -- (-0.38,-0.3) -- (0.38,-0.3) -- (0.3,-1.38) -- cycle;
            % Apron
            \fill[white] (-0.24,-0.38) -- (-0.26,-1.1) -- (0.26,-1.1) -- (0.24,-0.38) -- cycle;
            \draw[senderblue!60!black, line width=0.6pt] 
                (-0.24,-0.38) -- (-0.26,-1.1) -- (0.26,-1.1) -- (0.24,-0.38);
            % Neck
            \fill[brown!35!white] (-0.09,-0.19) rectangle (0.09,-0.3);
            % Head
            \fill[brown!35!white] (0,0.12) circle (0.3);
            \draw[brown!50!black, line width=0.6pt] (0,0.12) circle (0.3);
            % Eyes
            \fill[white] (-0.1,0.17) circle (0.06);
            \fill[white] (0.1,0.17) circle (0.06);
            \fill[black] (-0.08,0.18) circle (0.03);
            \fill[black] (0.12,0.18) circle (0.03);
            % Smile
            \draw[brown!60!black, line width=0.6pt] (-0.07,0.02) to[out=-30,in=210] (0.07,0.02);
            % Hair
            \fill[brown!45!black] 
                (-0.22,0.3) to[out=80,in=180] (0,0.5) 
                to[out=0,in=100] (0.22,0.3) 
                to[out=200,in=20] (0,0.34) 
                to[out=160,in=-20] cycle;
            % Arms
            \fill[senderblue!90!black] 
                (-0.38,-0.38) to[out=-70,in=120] (-0.52,-0.76) 
                to[out=-30,in=-150] (-0.4,-0.8) 
                to[out=80,in=-60] (-0.3,-0.45) -- cycle;
            \fill[senderblue!90!black] 
                (0.38,-0.38) to[out=-110,in=60] (0.52,-0.76) 
                to[out=-150,in=-30] (0.4,-0.8) 
                to[out=100,in=-120] (0.3,-0.45) -- cycle;
            % Hands
            \fill[brown!35!white] (-0.48,-0.76) circle (0.09);
            \fill[brown!35!white] (0.48,-0.76) circle (0.09);
            % Label
            \node[fill=senderblue!12, rounded corners=2pt, inner sep=2pt,
                  font=\sffamily\fontsize{8}{8}\selectfont\bfseries, text=senderblue!80!black] 
                at (0,-1.65) {Sender};
        \end{scope}
    
        % ================================================================
        %                    COUPONS AT SENDER
        % ================================================================
        \begin{scope}[shift={(-0.9,-0.45)}]
            % Stack effect
            \begin{scope}[opacity=0.35, shift={(0.06, 0.09)}]
                \drawcoupon{0}{0}{couponorange}{couponorange}
            \end{scope}
            \begin{scope}[opacity=0.55, shift={(0.03, 0.045)}]
                \drawcoupon{0}{0}{couponorange}{couponorange}
            \end{scope}
            % Front coupon
            \drawcoupon{0}{0}{couponorange}{couponorange}
            % Timer
            \node[timer=senderblue] at (0.78,0.45) {$T_s$};
            % Label
            \node[font=\sffamily\fontsize{7}{7}\selectfont, text=gray, align=center] at (0,-0.58) {Freshest};
        \end{scope}
    
        % ================================================================
        %                       ACTION DECISION
        % ================================================================
        \node[actionbox] (action) at (2.4, 0) {};
        \node[font=\sffamily\normalsize\bfseries] at (2.4,0) {$a_n$};
        \node[font=\sffamily\fontsize{7}{7}\selectfont\bfseries, text=gray!70] at (2.4, 0.5) {Action};
        \node[font=\sffamily\fontsize{6}{6}\selectfont, text=gray, align=left] at (2.4, -0.52) {
            $0$: silent\\[-1pt]
            $1$: transmit
        };
    
        % Arrow: Sender to Action
        \draw[flowarrow=senderblue!70] (1.15, -0.2) -- (1.95, -0.02);
    
        % ================================================================
        %                          CHANNEL
        % ================================================================
        \begin{scope}[shift={(4.3,0)}]
            % Box
            \shade[top color=channellight, bottom color=channelgold!22, rounded corners=6pt] 
                (-0.9,-0.9) rectangle (0.9,0.9);
            \draw[channelgold!80!black, line width=1.2pt, rounded corners=6pt] 
                (-0.9,-0.9) rectangle (0.9,0.9);
            % Antenna
            \fill[gray!65, rounded corners=1pt] (-0.1,-0.35) rectangle (0.1,-0.15);
            \fill[gray!55] (-0.05,-0.15) rectangle (0.05,0.12);
            \fill[channelgold!80!black] (0,0.12) circle (0.06);
            % Waves
            \foreach \r/\o in {0.22/0.9, 0.38/0.6, 0.54/0.35} {
                \draw[channelgold!90!black, line width=1.1pt, opacity=\o] 
                    ({-\r*0.75},{0.12+\r*0.38}) arc (148:32:\r);
            }
            % Label
            \node[font=\sffamily\fontsize{8}{8}\selectfont\bfseries, text=channelgold!70!black] 
                at (0,-0.65) {Channel};
            \node[font=\sffamily\fontsize{6}{6}\selectfont, text=gray!65] at (0,-0.82) {};
        \end{scope}
    
        % Arrow: Action to Channel
        \draw[flowarrow=channelgold!80!black] (2.85, 0) -- (3.4, 0);
    
        % ================================================================
        %                       SUCCESS PATH (ABOVE)
        % ================================================================
        % Arrow curves upward to avoid overlap
        \draw[flowarrow=successgreen!75, line width=1.6pt] 
            (5.2, 0.4) to[out=30, in=150] (6.6, 0.55);
         
        % Success label
        \node[font=\sffamily\fontsize{7}{7}\selectfont\bfseries, text=successgreen!70!black,
              fill=white, inner sep=1.5pt, rounded corners=1pt] 
            at (5.9, 1.05) {Success: $p_s$};
         
        % Small success coupon indicator
        \begin{scope}[scale=0.45, shift={(13.8, 1.6)}]
            \drawsuccesscoupon{0}{0}
        \end{scope}
    
        % ================================================================
        %                       FAILURE PATH (BELOW)
        % ================================================================
        \draw[-{Stealth[length=2mm, width=1.4mm]}, 
              line width=1.2pt, dangered!55, dashed, 
              shorten >=1pt, shorten <=1pt] 
            (5.2, -0.4) to[out=-30, in=90] (5.9, -1.35);
         
        % Failure label
        \node[font=\sffamily\fontsize{7}{7}\selectfont\bfseries, text=dangered!55,
              fill=white, inner sep=1.5pt, rounded corners=1pt] 
            at (6.65, -0.7) {Fail: $1{-}p_s$};
         
        % Failed coupon
        \begin{scope}[scale=0.45]
            \drawexpiredcoupon{13.1}{-3.6}
        \end{scope}
    
        % ================================================================
        %                          RECEIVER (CUSTOMER)
        % ================================================================
        \begin{scope}[shift={(8,0)}]
            % Shadow
            \fill[black, opacity=0.08, rounded corners=2pt] 
                (-0.35,-1.42) -- (-0.31,-0.28) -- (0.37,-0.28) -- (0.41,-1.42) -- cycle;
            % Dress
            \fill[receivergreen!80!black] 
                (-0.32,-1.38) -- (-0.29,-0.3) -- (0.29,-0.3) -- (0.32,-1.38) -- cycle;
            \draw[receivergreen!50!black, line width=0.5pt] (0,-0.38) -- (0,-1.3);
            % Neck
            \fill[brown!30!white] (-0.07,-0.19) rectangle (0.07,-0.3);
            % Head
            \fill[brown!30!white] (0,0.12) circle (0.28);
            \draw[brown!45!black, line width=0.6pt] (0,0.12) circle (0.28);
            % Eyes
            \fill[white] (-0.1,0.17) circle (0.055);
            \fill[white] (0.1,0.17) circle (0.055);
            \fill[black] (-0.08,0.18) circle (0.03);
            \fill[black] (0.12,0.18) circle (0.03);
            % Smile
            \draw[brown!50!black, line width=0.6pt] (-0.07,0.0) to[out=-35,in=215] (0.07,0.0);
            % Hair
            \fill[brown!35!black] 
                (-0.28,0.28) to[out=95,in=180] (0,0.48) 
                to[out=0,in=85] (0.28,0.28)
                to[out=260,in=80] (0.31,-0.02)
                to[out=250,in=0] (0.07,-0.14)
                to[out=180,in=290] (-0.31,-0.02)
                to[out=100,in=280] cycle;
            % Arms
            \fill[receivergreen!80!black] 
                (-0.29,-0.38) to[out=-70,in=120] (-0.45,-0.72) 
                to[out=-30,in=-150] (-0.34,-0.76) 
                to[out=80,in=-60] (-0.24,-0.45) -- cycle;
            \fill[receivergreen!80!black] 
                (0.29,-0.38) to[out=-110,in=60] (0.45,-0.72) 
                to[out=-150,in=-30] (0.34,-0.76) 
                to[out=100,in=-120] (0.24,-0.45) -- cycle;
            % Hands
            \fill[brown!30!white] (-0.42,-0.72) circle (0.08);
            \fill[brown!30!white] (0.42,-0.72) circle (0.08);
            % Label
            \node[fill=receiverlight, rounded corners=2pt, inner sep=2pt,
                  font=\sffamily\fontsize{8}{8}\selectfont\bfseries, text=receivergreen!80!black] 
                at (0,-1.65) {Receiver};
        \end{scope}
    
        % Arrow: Success to Receiver
        \draw[flowarrow=successgreen!70, line width=1.6pt] 
            (6.9, 0.55) to[out=20, in=140] (7.55, 0.15);
    
        % ================================================================
        %                    COUPON AT RECEIVER
        % ================================================================
        \begin{scope}[shift={(9.5,-0.45)}]
            \drawsuccesscoupon{0}{0}
            % Timer
            \node[timer=successgreen] at (0.78,0.45) {$T_r$};
            % Label
            \node[font=\sffamily\fontsize{7}{7}\selectfont, text=gray, align=center] at (0,-0.58) {Held};
        \end{scope}
    
        % ================================================================
        %                    COST & REWARD
        % ================================================================
        % Cost
        \node[infobox=dangered, font=\sffamily\fontsize{7}{7}\selectfont, inner sep=3pt] 
            (cost_box) at (2.4, 1.8) {\textbf{Cost:} $c$};
        \draw[thick, dangered!45, -{Stealth[length=1.5mm]}, shorten >=2pt] 
            (cost_box.south) -- (2.4, 0.65);
    
        % Reward
        \node[infobox=successgreen, font=\sffamily\fontsize{7}{7}\selectfont, align=center, inner sep=3pt] 
            (reward_box) at (8, 1.9) {\textbf{Reward:} $r$ {\fontsize{6}{6}\selectfont(if $T'_r\!>\!0$)}};
        \draw[thick, successgreen!45, -{Stealth[length=1.5mm]}, shorten >=2pt] 
            (reward_box.south) -- (8, 0.65);
    
        % ================================================================
        %                    HAPPY / SAD STATES
        % ================================================================
        \begin{scope}[shift={(10.6, 1.5)}]
            % Happy
            \fill[successgreen!45] (0,0) circle (0.28);
            \draw[successgreen!70!black, line width=0.8pt] (0,0) circle (0.28);
            \fill[black] (-0.08,0.06) circle (0.03);
            \fill[black] (0.08,0.06) circle (0.03);
            \draw[black, line width=0.8pt] (-0.1,-0.05) to[out=-40,in=220] (0.1,-0.05);
            \node[font=\sffamily\fontsize{7}{7}\selectfont, right=0.32cm, text=successgreen!70!black] 
                at (0.28,0) {$T'_r\!>\!0$};
        \end{scope}
    
        \begin{scope}[shift={(10.6, 0.7)}]
            % Sad
            \fill[dangered!35] (0,0) circle (0.28);
            \draw[dangered!70!black, line width=0.8pt] (0,0) circle (0.28);
            \fill[black] (-0.08,0.06) circle (0.03);
            \fill[black] (0.08,0.06) circle (0.03);
            \draw[black, line width=0.8pt] (-0.08,-0.1) to[out=40,in=140] (0.08,-0.1);
            \node[font=\sffamily\fontsize{7}{7}\selectfont, right=0.32cm, text=dangered!70!black] 
                at (0.28,0) {$T'_r\!=\!0$};
        \end{scope}
    
        % ================================================================
        %                    DYNAMICS BOX (COMPACT)
        % ================================================================
        \node[infobox=senderblue, text width=10.8cm, align=left, 
              font=\sffamily\fontsize{9}{10}\selectfont, inner sep=5pt] 
            at (4, -2.95) {
            \textbf{Per-Slot Dynamics}\\[3pt]
            \begin{tabular}{@{}c@{\,}l@{}}
                \tikz\fill[senderblue] (0,0) circle (0.08); & 
                \textbf{Sender:} 
                $T'_{s} = \max\{T_{s} - 1,\, \tilde{T}_{s}\}$ 
                \textit{\fontsize{7}{7}\selectfont(keep fresher)}\\[2pt]
                \tikz\fill[successgreen] (0,0) circle (0.08); & 
                \textbf{Success} ($a\!=\!1$, delivered): 
                $T'_{r} = T_{s}$\\[2pt]
                \tikz\fill[dangered] (0,0) circle (0.08); & 
                \textbf{Otherwise:} 
                $T'_{r} = \max\{T_{r} - 1,\, 0\}$ 
                \textit{\fontsize{7}{7}\selectfont(expires)}\\[2pt]
                \tikz\fill[channelgold] (0,0) circle (0.08); & 
                \textbf{Reward:} 
                $r \cdot \mathds{1}\{T'_r > 0\} - c \cdot a$
                \textit{\fontsize{7}{7}\selectfont(evaluated at slot end)}\\
            \end{tabular}
        };
    
        % ================================================================
        %                    STATE SPACE BOX (BOTTOM RIGHT)
        % ================================================================
        \node[infobox=gray, text width=2.6cm, align=left, 
              font=\sffamily\fontsize{9}{10}\selectfont, inner sep=4pt] 
            at (10.1, -2.95) {
            \textbf{State Space}\\[2pt]
            $S_n = (T_{r,n}, T_{s,n})$\\[1pt]
            {\fontsize{7}{7}\selectfont $T_r \in \{0,\ldots,K\}$}\\
            {\fontsize{7}{7}\selectfont $T_s \in \{1,\ldots,K\}$}
        };
    
        % ================================================================
        %                    LEGEND (BOTTOM LEFT - NO OVERLAP)
        % ================================================================
        \node[draw=gray!35, rounded corners=3pt, fill=white, inner sep=4pt, 
              font=\sffamily\fontsize{7}{8}\selectfont, align=left] 
            at (-2.2, -2.95) {
            \textbf{Legend}\\[2pt]
            \tikz\draw[-{Stealth}, senderblue!70, line width=1pt] (0,0) -- (0.35,0); Data\\[1pt]
            \tikz\draw[-{Stealth}, successgreen!70, line width=1pt] (0,0) -- (0.35,0); Success\\[1pt]
            \tikz\draw[-{Stealth}, dangered!55, dashed, line width=1pt] (0,0) -- (0.35,0); Fail
        };
    
    \end{tikzpicture}
    } % End of resizebox
    \caption{Coupon-collector model with expiring samples.}
    \label{fig:coupon_collector}
\end{figure}

\subsection{A Coupon-Collector Variant with Expiry}

Consider a shopkeeper receiving one coupon per day, each valid for a random number of days $T \in \{1,\dots,K\}$ drawn i.i.d.\ from $p_T$. The shopkeeper may ship the coupon to a customer incurring a fee $c$ with delivery failure probability $1-p_s$, or may choose to keep it (possibly expiring unused). The customer earns reward $r$ for each day she holds an unexpired coupon. The goal is to maximize average reward minus shipping fees.

Unlike classical coupon-collector problems that minimize collection time with non-expiring coupons~\cite{ross2010introduction}, our variant features strict expiration times and a steady-state reward--cost trade-off, requiring new analytical tools.

\subsection{MDP Components}
\label{ssec:MDP-components}
We now formalize the coupon-collector analogy as a Markov decision process, with components illustrated in Figure~\ref{fig:coupon_collector}. The shopkeeper maps to the sender, the customer to the receiver, and coupon validity to sample expiration time.
\paragraph{State and Action.}
Let $T_{r,n} \in \{0,\dots,K\}$ denote the receiver's sample remaining lifetime ($T_{r,n}=0$ means no valid sample), and $T_{s,n} \in \{1,\dots,K\}$ the sender's freshest sample lifetime. The state is $S_n = (T_{r,n}, T_{s,n})$. At each slot, the sender chooses $a_n \in \{0,1\}$ (transmit or remain silent).

\paragraph{Transitions.}
The sender keeps the fresher of the aged sample and new arrival:
$T_{s,n+1} = \max\{T_{s,n}-1, \tilde{T}_{s,n+1}\}$, where $\tilde{T}_{s,n+1} \sim p_T$.
The receiver's timer evolves as:
\begin{equation}
T_{r, n+1} = 
\begin{cases}
T_{s,n}, & \parbox[t]{.35\columnwidth}{if $a_n = 1$ and \\ delivery succeeds,} \\[1em]
\max\{T_{r,n} - 1, 0\}, & \text{otherwise}.
\end{cases}
\end{equation}

\paragraph{Reward and Objective.}
The instantaneous reward is $R(S_n, a_n, S_{n+1}) = r \cdot \mathds{1}\{T_{r,n+1} > 0\} - c \cdot a_n$. We seek a stationary policy $\pi: \mathcal{S} \to \{0,1\}$ maximizing the long-run average reward:
\[
\max_{\pi}\; \liminf_{N\to\infty}\frac{1}{N}\sum_{n=1}^{N}\mathbb{E}^\pi\bigl[R(S_n,a_n, S_{n+1})\bigr].
\]

\paragraph{Bellman Equation.}
The optimal policy satisfies the average-reward Bellman equation~\cite{bertsekas2011dynamic}:
\begin{equation} \label{eq:avg_bellman}
\rho^* + V(T_r, T_s) = \max_{a \in \{0,1\}} \bigl\{\bar{R}(S, a) + \mathbb{E}[V(T'_r, T'_s) \mid S, a]\bigr\},
\end{equation}
where $\rho^*$ is the optimal average reward (gain), $V(\cdot)$ is the bias function, and $\bar{R}(S, a) = \mathbb{E}[R(S, a, S')]$. Value iteration computes $\pi^*$ via:
\begin{equation}
\begin{split}
Q^{(k+1)}(T_r, T_s, a) = -c \cdot a + \sum_{T'_r, T'_s} P_a(S \to S') \\
\times \bigl[r \cdot \mathds{1}\{T'_r > 0\} + V^{(k)}(T'_r, T'_s)\bigr],
\end{split}
\label{eq:bellman_Q_update}
\end{equation}
but requires exact knowledge of $p_s$ and $p_T$.
The MDP admits a compact state space of  $(K+1) \times K$, making value iteration tractable when the $p_s$ and $p_T$ are known. Two questions remain: (i) does the optimal policy possess exploitable structure, and (ii) can we learn effective policies when these parameters are unknown? The next section addresses both.

\section{Optimal Scheduling Policy}
\label{sec:structural}

We now study the optimal policy from three angles. We start by establishing structural properties that hold for any expiration time distribution: the value function is monotone and the optimal policy follows a threshold rule. When expiration times are constant and parameters are known, these results yield a closed-form optimal threshold. When parameters are unknown, we show how to embed the same structural insights into a Q-learning algorithm that learns efficiently.
\begin{lemma}
\label{lem:monotone}
Let $V(T_r,T_s)$ satisfy~\eqref{eq:avg_bellman}. Then $V(T_r,T_s) \le V(T_r{+}1,T_s)$ and $V(T_r,T_s) \le V(T_r,T_s{+}1)$ for all valid states.
\end{lemma}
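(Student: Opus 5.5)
I will prove both monotonicity statements by induction on relative value iteration. Then I pass to the limit, so that they hold for the bias function $V$ solving \eqref{eq:avg_bellman}.

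\textbf{Setup.} Start from $V^{(0)}\equiv 0$. This is trivially non-decreasing in each coordinate. Define the Bellman operator
\[
(\mathcal{T}V)(T_r,T_s)=\max_{a\in\{0,1\}}\bigl\{-ca+\mathbb{E}[r\mathds{1}\{T_r'>0\}+V(T_r',T_s')\mid S,a]\bigr\}.
\]
Relative value iteration subtracts a constant at each step, which does not affect monotonicity. Since the state space is finite and the chain is unichain, the normalized iterates $V^{(k)}-V^{(k)}(s_0)$ converge to the bias $V$. Monotonicity is preserved under pointwise limits, so it suffices to show that $\mathcal{T}$ maps coordinatewise non-decreasing functions to coordinatewise non-decreasing functions.

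Fix such a $V$ and write $W(x,y)=r\mathds{1}\{x>0\}+V(x,y)$. Then $W$ is also non-decreasing in both arguments. The key tool is a coupling: for two states, I use the same arrival $\tilde T_s$ and the same channel outcome.

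\textbf{Monotonicity in $T_r$.} Compare $(T_r,T_s)$ with $(T_r+1,T_s)$ under the same action $a$.
\begin{itemize}
\item Under $a=0$, or under $a=1$ with a failed delivery, the next receiver timers are $\max\{T_r-1,0\}\le \max\{T_r,0\}$.
\item Under $a=1$ with a successful delivery, both next receiver timers equal $T_s$.
\item The sender's next state $T_s'=\max\{T_s-1,\tilde T_s\}$ is identical in both cases.
\end{itemize}
So the coupled next states are ordered, and since $W$ is non-decreasing, each $Q$-value is ordered. Maximizing over $a$ preserves the inequality.

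\textbf{Monotonicity in $T_s$.} Compare $(T_r,T_s)$ with $(T_r,T_s+1)$ under the same action and the same coupling.
\begin{itemize}
\item The sender's next state satisfies $\max\{T_s-1,\tilde T_s\}\le\max\{T_s,\tilde T_s\}$.
\item On a successful delivery, the receiver's next state is $T_s\le T_s+1$.
\item Otherwise, the receiver's next state $\max\{T_r-1,0\}$ is identical in both cases.
\end{itemize}
Again both coordinates of the next state are ordered, so $W$ yields ordered $Q$-values for each $a$, and the max preserves the order.

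\textbf{Expected obstacle.} The main obstacle is the limiting argument, not the inductive step. Average-reward value iteration needs aperiodicity or a unichain condition for the normalized iterates to converge. Moreover, $V$ is only unique up to an additive constant, which is harmless here.

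If convergence is problematic, I would use the vanishing-discount approach instead. The same coupling argument shows that the discounted value $V_\beta$ is monotone in both coordinates for each $\beta<1$. The differences $V_\beta-V_\beta(s_0)$ then converge along a subsequence to a bias function satisfying \eqref{eq:avg_bellman}, and that limit inherits monotonicity.

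For the finite state space considered here, aperiodicity can also be enforced by the standard transformation $\tilde P=\tfrac12(I+P)$. This transformation preserves both the optimal policy and the bias up to scaling, so the plain value-iteration route also goes through.
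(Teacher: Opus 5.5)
Your proof is correct and follows the paper's route: induction on value iteration from $V^{(0)}\equiv 0$, coupling the two states through the same $\tilde T_s$ and the same channel outcome, then passing to the limit. Your joint induction for the $T_s$ step, where the success branch moves both coordinates, is more explicit than the paper's ``analogous argument.'' Your treatment of the $k\to\infty$ limit is also more careful than the paper's.

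One correction: the chain is not unichain in general. With $p_s=1$ and constant lifetime $K>5$, the policy ``transmit iff $T_r=5$'' has two recurrent classes, $\{(5,K),\dots,(K,K)\}$ and $\{(0,K)\}$. So rely on your vanishing-discount or aperiodicity-transformation route for the limit, not on unichain relative value iteration.
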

\begin{proof}
See Appendix~\ref{app:proof-monotone}.
\end{proof}
The monotonicity reflects natural intuitions: a receiver with more remaining validity ($T_r{+}1$ vs.\ $T_r$) can wait longer before 
requiring an update, while a sender with a fresher sample ($T_s{+}1$ vs.\ $T_s$) offers greater potential value upon 
successful transmission. Both advantages propagate through the Bellman recursion, yielding higher long-run reward.

For binary actions, the optimal policy satisfies 
$\pi^*(T_r,T_s)=\mathds{1}\{\Delta Q(T_r,T_s)>0\}$, where 
$\Delta Q = Q(\cdot,1) - Q(\cdot,0)$ is the advantage of transmitting. The following theorem shows this advantage has sufficient structure 
to yield a threshold policy.
%\begin{lemma}
%\label{lemma:monotonicity_submodularity}
%$\Delta Q(T_r, T_s)$ is non-increasing in $T_r$ and non-decreasing in $T_s$.
%\end{lemma}
%\begin{proof}
%See Appendix~\ref{app:proof-submodularity}.
%\end{proof}

\begin{theorem}
\label{thm:row_threshold}
For each $T_s \in \{1,\dots,K\}$, there exists a threshold $\theta(T_s) \in \{0,\dots,K\}$ such that $\pi^*(T_r,T_s) = 1$ iff $T_r \le \theta(T_s)$. Moreover, $\theta(T_s) \le \theta(T_s+1)$.
\end{theorem}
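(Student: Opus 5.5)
The plan is to write the action gap $\Delta Q$ in closed form and read both claims off it, using Lemma~\ref{lem:monotone} for the first claim and a supermodularity property of $V$ for the second.

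\textbf{Computing $\Delta Q$.} The sender's transition $T'_s=\max\{T_s-1,\tilde T_s\}$ does not depend on $a$. The action affects only the receiver timer, and only on the success event of probability $p_s$: there $T'_r=T_s$ instead of $(T_r-1)^+$. So the two $Q$-values coincide on the failure event. Define
\begin{equation*}
W(x;T_s)\triangleq r\,\mathds{1}\{x>0\}+\mathbb{E}\bigl[V(x,T'_s)\,\big|\,T_s\bigr].
\end{equation*}
Then
\begin{equation*}
\Delta Q(T_r,T_s)=-c+p_s\Bigl[W(T_s;T_s)-W\bigl((T_r-1)^+;T_s\bigr)\Bigr].
\end{equation*}

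\textbf{Row threshold.} By Lemma~\ref{lem:monotone}, $V(x,y)$ is non-decreasing in $x$ for every $y$. Averaging over $T'_s$ preserves this, and $\mathds{1}\{x>0\}$ is non-decreasing in $x$, so $W(\cdot;T_s)$ is non-decreasing. Hence $\Delta Q(T_r,T_s)$ is non-increasing in $T_r$. The set $\{T_r:\Delta Q>0\}$ is therefore a down-set $\{0,\dots,\theta(T_s)\}$; take $\theta(T_s)=0$ if only $T_r=0$ qualifies, with the convention chosen consistently when even $T_r=0$ fails.

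As a by-product, if $T_r-1\ge T_s$ then the bracket is $\le 0$ and $\Delta Q\le -c<0$. This gives the ``never send obsolete'' rule and shows $\theta(T_s)\le T_s\le K$.

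\textbf{Monotonicity of $\theta$.} It suffices to show $\Delta Q(T_r,T_s)\le \Delta Q(T_r,T_s+1)$ for every $T_r$. Write the bracket as a telescoping sum of unit increments
\begin{equation*}
D(x;T_s)=W(x+1;T_s)-W(x;T_s)
\end{equation*}
over $x$ from $(T_r-1)^+$ to $T_s-1$. Passing to $T_s+1$ has two effects:
\begin{itemize}
\item it adds the extra increment $D(T_s;T_s+1)\ge 0$, by Lemma~\ref{lem:monotone};
\item it replaces each $D(x;T_s)$ by $D(x;T_s+1)$.
\end{itemize}
So I need $D(x;T_s)$ to be non-decreasing in $T_s$. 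The conditional law of $T'_s$ given $T_s+1$ stochastically dominates that given $T_s$, by coupling on the same $\tilde T_s$. Hence it suffices that $V(x+1,y)-V(x,y)$ is non-decreasing in $y$, i.e.\ that $V$ is supermodular on the lattice $\{0,\dots,K\}\times\{1,\dots,K\}$. (The indicator part of $D$ does not depend on $T_s$.)

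\textbf{Main obstacle: supermodularity of $V$.} I would prove this by induction along relative value iteration, with $V^{(0)}\equiv 0$. The hypothesis is that $V^{(k)}$ is non-decreasing in both coordinates (as in Lemma~\ref{lem:monotone}) and supermodular.

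Under this hypothesis, each $Q^{(k+1)}(\cdot,a)$ is supermodular, because two operations preserve supermodularity:
\begin{itemize}
\item the uncontrolled shift $T_s\mapsto\max\{T_s-1,\tilde T_s\}$ composed with expectation, which is monotone and separable across coordinates;
\item the receiver maps $x\mapsto(x-1)^+$ and $x\mapsto T_s$.
\end{itemize}
The $a=1$ term $p_s\,\mathbb{E}[V(T_s,T'_s)]$ is a function of $T_s$ only and needs a separate check.

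The delicate point is $V^{(k+1)}=\max_a Q^{(k+1)}(\cdot,a)$, since a maximum of supermodular functions need not be supermodular. I would handle it by using the structure just obtained at step $k$: the maximizer is a threshold in $T_r$ that is monotone in $T_s$. Then I would verify the four-point inequality $V(x{+}1,y{+}1)+V(x,y)\ge V(x{+}1,y)+V(x,y{+}1)$ case by case, according to which of the four corners lie in the transmit region. Because the region is a monotone staircase, only a few case patterns arise, and in each the mixed corners can be bounded using the gap $\Delta Q^{(k+1)}$ at the appropriate corner.

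If this case analysis fails, the fallback is a sample-path coupling argument. Run two systems that differ by one unit of $T_r$ under a common policy and common randomness, and show directly that the expected reward advantage of the extra unit is larger when the sender's timer is higher. Finally, pass to the limit $k\to\infty$ using convergence of relative value iteration on the finite state space, which carries supermodularity over to $V$.
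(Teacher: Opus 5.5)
\paragraph{Verdict: genuine gap.}
Your row-threshold half is correct and matches the paper. Writing $\Delta Q=-c+p_s\bigl[W(T_s;T_s)-W((T_r-1)^+;T_s)\bigr]$ and using Lemma~\ref{lem:monotone} gives the down-set structure and $\theta(T_s)\le T_s$. The gap is in the second half. The property you reduce it to, supermodularity of $V$ (``an extra unit of receiver time is worth more when the sender's timer is higher''), is false. So neither the value-iteration induction nor the coupling fallback can succeed.

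\paragraph{Why supermodularity fails.}
From $T_r=0$ and $T_r=1$, both actions give identical rewards and transitions, so $V(0,\cdot)=V(1,\cdot)$. Define $G(y)\triangleq p_s\,\mathbb{E}[V(y,T'_s)-V(1,T'_s)\mid T_s=y]$. The Bellman equation at $T_r=1,2$ then gives
\[
V(2,y)-V(1,y)=r+(G(y)-c)^+-(G(y)+p_sr-c)^+ .
\]
Two facts follow:
\begin{itemize}
\item $G(1)=0$.
\item The display implies $V(2,\cdot)-V(1,\cdot)\ge(1-p_s)r$, hence $G(2)\ge p_s(1-p_s)r>0$.
\end{itemize}
Now take $0<p_s<1$ and $0<c\le p_sr$; the paper's own setting $p_s=0.5$, $c/r=0.5$ qualifies. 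For every $K\ge2$ and every $p_T$,
\[
\bigl[V(2,2)-V(1,2)\bigr]-\bigl[V(2,1)-V(1,1)\bigr]=-\min\{G(2),c\}<0 .
\]
This is exactly the max-step you flagged. $V(1,y)$ depends on $G(y)$ through $(G+p_sr-c)^+$, while $V(2,y)$ depends on it through $(G-c)^+$. The first kink comes earlier, so on the band between the two kinks, raising $T_s$ lifts $V(1,\cdot)$ but not $V(2,\cdot)$.

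\paragraph{Consequence for your telescoping step.}
The termwise comparison breaks. For $K\ge3$ and $p_T(1)>0$, coupling on $\tilde T_s$ gives
\[
D(1;3)-D(1;2)=-p_T(1)\min\{G(2),c\}<0 .
\]
The index $x=1$ lies in your telescoping range whenever $T_r\le2$. That is precisely where $\theta(2)$ can sit, since $\Delta Q(2,2)=G(2)-c$.

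\paragraph{What is missing.}
You would need a quantitative statement that the new top increment $D(T_s;T_s+1)$ outweighs the losses $D(x;T_s)-D(x;T_s+1)$ in the lower increments. This is only needed for $T_r\le T_s$. (For $(T_r-1)^+>T_s$ your telescoping runs in the opposite direction, and that comparison is not needed anyway, so ``for every $T_r$'' should be dropped.) Alternatively, you need a different argument altogether, e.g.\ an interchange argument on attempt times using absolute expiry epochs.

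\paragraph{The paper does not close this either.}
The $T_s$-part of its action-gap lemma invokes only Lemma~\ref{lem:monotone}. Its first ``$\ge$'' is an identity, and its ``the positive terms dominate'' step silently requires $V(T_s,T_s'')-V(T_s,T_s')\ge V((T_r-1)^+,T_s'')-V((T_r-1)^+,T_s')$. That is the same increasing-differences property shown false above. So you correctly diagnosed that monotonicity alone is insufficient, but the lemma you propose to bridge the gap cannot be proved.
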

\begin{proof}
See Appendix~\ref{app:proof-threshold}.
\end{proof}
 If the receiver still has plenty of validity left, there is no rush to transmit. As $T_s$ increases, the potential gain from successful delivery grows, so the sender becomes more willing to attempt transmission early. The next result  identifies three properties that hold across the entire state space.

\begin{theorem}
\label{thm:global_structure}
The optimal policy $\pi^*$ satisfies:
\begin{enumerate}[label=\textnormal{(\roman*)},itemsep=1pt,leftmargin=1.4em]
\item If $T_r=0$ and $p_s r - c > 0$, then $\pi^*(0,T_s)=1$ for all $T_s$.
\item  If $T_r > T_s$, then $\pi^*(T_r,T_s)=0$.
\item  If transmitting is optimal at $(T_r, T_s)$ for $T_r > 0$, it is optimal at $(T_r-1, T_s)$.
\end{enumerate}
\begin{proof}
See Appendix~\ref{app:proof-global-structure}.
\end{proof}

\end{theorem}
Intuitively, an expired receiver should always be refreshed when transmission pays off on average. Sending data staler than what the receiver already holds is clearly wasteful since cost is incurred with no benefit. And as the receiver's validity diminishes, the urgency to transmit naturally grows.

The preceding results hold for general lifetime distributions. When 
lifetimes are constant, the threshold admits a closed-form characterization.
\begin{theorem}
\label{thm:constant_lifetime}
For constant lifetime $T_{s,n} = K$, the optimal policy is a threshold rule: transmit iff $T_r \le \theta^*$. The average reward is $\rho(\theta) = r - \frac{c + r(1-p_s)^\theta}{p_s(K-\theta) + 1}$, and $\theta^*$ is the largest integer satisfying $(K-\theta+1)(1-p_s)^{\theta-1} > c/(p_s r)$.
\end{theorem}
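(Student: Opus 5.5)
Since $T_{s,n}\equiv K$, the sender coordinate is frozen, so the MDP reduces to a one-dimensional chain in $T_r\in\{0,\dots,K\}$. Theorem~\ref{thm:row_threshold} (with $T_s=K$) then says the optimal stationary policy has the form ``transmit iff $T_r\le\theta$'' for some $\theta\in\{0,\dots,K\}$. This settles the structural claim. It also reduces the problem to computing the gain $\rho(\theta)$ of each threshold policy and maximizing it over the finite set $\{0,\dots,K\}$; because the optimal policy lies in this family, the maximizer is optimal among all policies. The plan is therefore: (1) compute $\rho(\theta)$ by a renewal-reward argument; (2) study the discrete difference $\rho(\theta)-\rho(\theta-1)$ to locate $\theta^*$.

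\emph{Renewal computation.} I will take regeneration epochs to be the ends of slots in which a delivery succeeds, so that $T_r=K$ at the start of each cycle. Under the threshold-$\theta$ policy the sender stays silent for $K-\theta$ slots, during which $T_r$ decays $K-1,\dots,\theta$. Every one of these slots ends with $T_r>0$ and earns $r$. It then transmits in every slot until success, which takes $N\sim\mathrm{Geom}(p_s)$ attempts. With $q=1-p_s$ the cycle quantities are as follows. The expected length is $\mathbb E[L]=K-\theta+1/p_s$, and the expected cost is $c\,\mathbb E[N]=c/p_s$. In the $j$-th attempt slot, a failure leaves $T_r'=\max\{\theta-j,0\}$, while the successful $N$-th slot ends with $T_r'=K$. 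Hence reward is lost exactly in the failed attempts with $j\ge\theta$, with the convention that for $\theta=0$ the slot in which $T_r$ reaches $0$ is itself attempt $j=0$. The expected number of such slots is $\sum_{j\ge\theta}\Pr(N>j)=q^\theta/p_s$. The renewal-reward theorem gives
\[
\rho(\theta)=\frac{r\,\mathbb E[L]-r\,q^\theta/p_s-c/p_s}{K-\theta+1/p_s}
= r-\frac{c+r q^\theta}{p_s(K-\theta)+1},
\]
which is the stated formula.

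\emph{Optimizing $\theta$.} Write $f(\theta)=(c+rq^\theta)/D_\theta$ with $D_\theta=p_s(K-\theta)+1$, so that $D_{\theta-1}=D_\theta+p_s$. Cross-multiplying, $\rho(\theta)>\rho(\theta-1)$ is equivalent to
\[
c p_s+r p_s q^\theta< r q^{\theta-1}(1-q)D_\theta,
\]
i.e.\ $c<rq^{\theta-1}(D_\theta-q)=r p_s q^{\theta-1}(K-\theta+1)$. This is the condition $(K-\theta+1)(1-p_s)^{\theta-1}>c/(p_sr)$. Its left-hand side $g(\theta)$ is a product of two positive factors, each non-increasing in $\theta$ on $\{1,\dots,K\}$, so $g$ is non-increasing. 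Consequently $\rho$ increases exactly up to the last $\theta$ for which the inequality holds and is non-increasing afterwards; that is, $\rho$ is unimodal and $\theta^*$ is the largest integer satisfying the inequality, or $\theta^*=0$ if none does.

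\emph{Main obstacle.} I expect the main difficulty to be bookkeeping rather than any conceptual step. The reward is evaluated at slot end, using $T_r'$, so the counts of ``lost'' slots and cycle length must be matched to exactly this convention, and the $\theta=0$ case must be handled consistently. Ties where $g(\theta)=c/(p_sr)$, in which $\rho$ is flat, should also be checked. I would verify the renewal formula on $\theta=0$ and $\theta=K$ by direct hand computation before trusting the general expression.
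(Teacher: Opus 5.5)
Your proposal follows the paper's proof: the same renewal cycle (reset to $T_r=K$ on a successful delivery, a silent phase of $K-\theta$ slots, then a $\mathrm{Geom}(p_s)$ attempting phase) yields the same $\rho(\theta)$, and the same comparison of $\rho(\theta)$ with $\rho(\theta-1)$ reduces to $(K-\theta+1)(1-p_s)^{\theta-1}>c/(p_s r)$. Your lost-slot count $(1-p_s)^{\theta}/p_s$ is just the complement of the paper's $\mathbb{E}[\min(N_A,\theta)]$. You are in places more careful than the paper: you cite Theorem~\ref{thm:row_threshold} for the threshold form, which the paper's proof simply assumes; you handle the slot-end bookkeeping at $\theta=0$ explicitly; and you call $g$ non-increasing, which is correct since it is not strictly decreasing when $p_s=1$. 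However, exactly as in the paper, the convention $\theta^*=0$ fails when $c>p_s rK$: then $\rho(0)=(p_s rK-c)/(p_sK+1)<0$, so never transmitting (gain $0$) beats every threshold in $\{0,\dots,K\}$.
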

\begin{proof}
See Appendix~\ref{app:proof-constant-lifetime}.
\end{proof}

The threshold increases with $K$ and $r$, and decreases with $p_s$ (reliable channels require fewer transmission attempts) 
and $c$, reflecting the tradeoff between transmission cost and 
expiration risk. For random lifetimes, closed-form solutions are 
generally unavailable, motivating the learning approach below.

\subsection{Structure-Aware Q-Learning}
\label{ssec:ql_without_empty}

When the channel success probability $p_s$ and lifetime distribution $p_T$ 
are unknown, we turn to model-free reinforcement learning.  The structural 
results of Theorems~\ref{thm:row_threshold}--\ref{thm:global_structure} 
accelerate learning in three ways.  
Theorem~\ref{thm:global_structure}\,(ii) shows that transmitting 
when $T_r > T_s$ is never optimal; hard-coding $a=0$ in this region 
eliminates roughly half the state--action space.  
Theorem~\ref{thm:global_structure}\,(iii) implies the transmit region 
is downward-closed in $T_r$, which steers $\epsilon$-greedy exploration 
toward the threshold boundary even though SAQ does not parameterise a 
threshold explicitly.  Finally, for each sampled $(T_s, Z)$ pair we 
sweep all $T_r \in \{0,\ldots,K\}$, obtaining $O(K)$ Q-updates from a 
single sample.

SAQ does not enforce the full threshold structure of 
Theorem~\ref{thm:row_threshold} as a hard constraint, so a 
residual gap to optimal Value Iteration may persist, 
especially for large $K$ or heavy-tailed lifetimes.  
Parameterising the policy directly through $\theta(T_s)$ 
and learning the thresholds via stochastic approximation 
could close this gap; we leave the investigation to future work.

Algorithm~\ref{alg:q_no_empty} summarizes our approach. At each iteration, 
we sample a sender lifetime $T_s \sim p_T$, a channel realization 
$Z \sim \text{Bern}(p_s)$, and a next-slot arrival $\tilde{T}_s \sim p_T$. 
We then loop over all receiver states: for $T_r > T_s$, we force $a=0$; 
for $T_r \leq T_s$, we apply $\epsilon$-greedy exploration. Each 
$(T_r, T_s, a)$ triple receives a Q-update based on the resulting 
transition and reward.

\begin{algorithm}[!t]
\caption{Structure-Aware Q-Learning (SAQ)}
\label{alg:q_no_empty}
\begin{algorithmic}[1]
  \State \textbf{Init:} $Q_0(T_r, T_s, a) \gets 0$ for all states/actions; $\hat{\rho}_0 \gets 0$
  \For{$n = 0, 1, 2, \ldots$} 
    \State Sample $T_s \sim p_T$, $Z \sim \operatorname{Bern}(p_s)$, $\tilde{T}_s \sim p_T$
    \State $R_{\mathrm{sum}} \gets 0$ \Comment{Initialize accumulator}
    \For{$T_r = 0$ \textbf{to} $K$} \Comment{Sweep all receiver states}
      \If{$T_r > T_s$} 
        \State $a \gets 0$ \Comment{Never send obsolete}
      \Else
        \State $a \gets \epsilon\text{-greedy}\bigl(\arg\max_{a'} Q_n(T_r, T_s, a')\bigr)$
      \EndIf
      \State $T'_r \gets T_s$ if $(a=1 \land Z=1)$ else $\max\{T_r-1, 0\}$
      \State $T'_s \gets \max\{T_s - 1, \tilde{T}_s\}$
      \State $R \gets r \cdot \mathds{1}_{\{T'_r > 0\}} - c \cdot a$
      \State $R_{\mathrm{sum}} \gets R_{\mathrm{sum}} + R$ \Comment{Accumulate reward}
      \State $\delta \gets R - \hat{\rho}_n + \max_{a'} Q_n(T'_r, T'_s, a') - Q_n(T_r, T_s, a)$
      \State $Q_{n+1}(T_r, T_s, a) \gets Q_n(T_r, T_s, a) + \alpha_n \delta$
    \EndFor
    \State $\bar{R}_n \gets R_{\mathrm{sum}} / (K+1)$
    \State $\hat{\rho}_{n+1} \gets \hat{\rho}_n + \beta_n (\bar{R}_n - \hat{\rho}_n)$
  \EndFor
\end{algorithmic}
\end{algorithm}

The algorithm uses two-timescale stochastic approximation~\cite{bertsekas2011dynamic}.  Q-values evolve on step-size $\alpha_n$; the average reward estimate $\hat{\rho}$ evolves on $\beta_n = o(\alpha_n)$. With $\sum_n \alpha_n = \infty$, $\sum_n \alpha_n^2 < \infty$ (and likewise for $\beta_n$), convergence to the optimal policy is guaranteed. The structural constraints from 
Theorem~\ref{thm:global_structure} cut the exploration space roughly in half, and the batch sweep yields $O(K)$ updates per sample. Both modifications speed up learning considerably in practice.

\section{Simulation Results}
\label{sec:simulations}

We validate our structure-aware Q-learning (SAQ) on a system with maximum expiration time $K=20$, reward $r=1$, lifetime distribution $p_T$ uniform on $\{1,\dots,K\}$, and varying $p_s$ and cost ratio $c/r$.
Figure~\ref{fig:policy_heatmaps_ps} reveals how channel reliability $p_s$ shapes the optimal policy structure. At $p_s = 1.0$ (perfect channel), the send region is minimal, the scheduler waits until the receiver's sample nears expiration, confident that a single transmission will succeed. In other words, because of reliability, the optimal policy is JIT as shown by Theorem \ref{thm:optimal-policy}. As $p_s$ decreases, the send region progressively expands: at $p_s = 0.25$, the scheduler transmits earlier to hedge against repeated failures. Quantitatively, the threshold $\theta(T_s)$ increases as $p_s$ drops, confirming that unreliable channels demand more aggressive scheduling. Across all settings, the double-threshold structure from Theorem~\ref{thm:row_threshold} holds: thresholds increase monotonically with $T_s$, reflecting the value of transmitting fresher samples. The diagonal boundary $T_r = T_s$ remains inviolate as transmitting obsolete data is never optimal regardless of channel conditions.

Figure~\ref{fig:parameter_sweeps} evaluates performance of SAQ by sweeping channel reliability $p_s$ and cost ratio $c/r$. As expected, average reward increases with $p_s$ (more successful deliveries) and decreases with $c/r$ (higher transmission penalty). Across all parameter ranges, SAQ consistently stays close to the optimal $\rho^*$ from Value Iteration, while baseline Q-learning exhibits a persistent gap due to slower convergence.
\begin{figure}[t]
  \centering
  \includegraphics[width=.9\columnwidth, trim=0 80 0 60, clip]{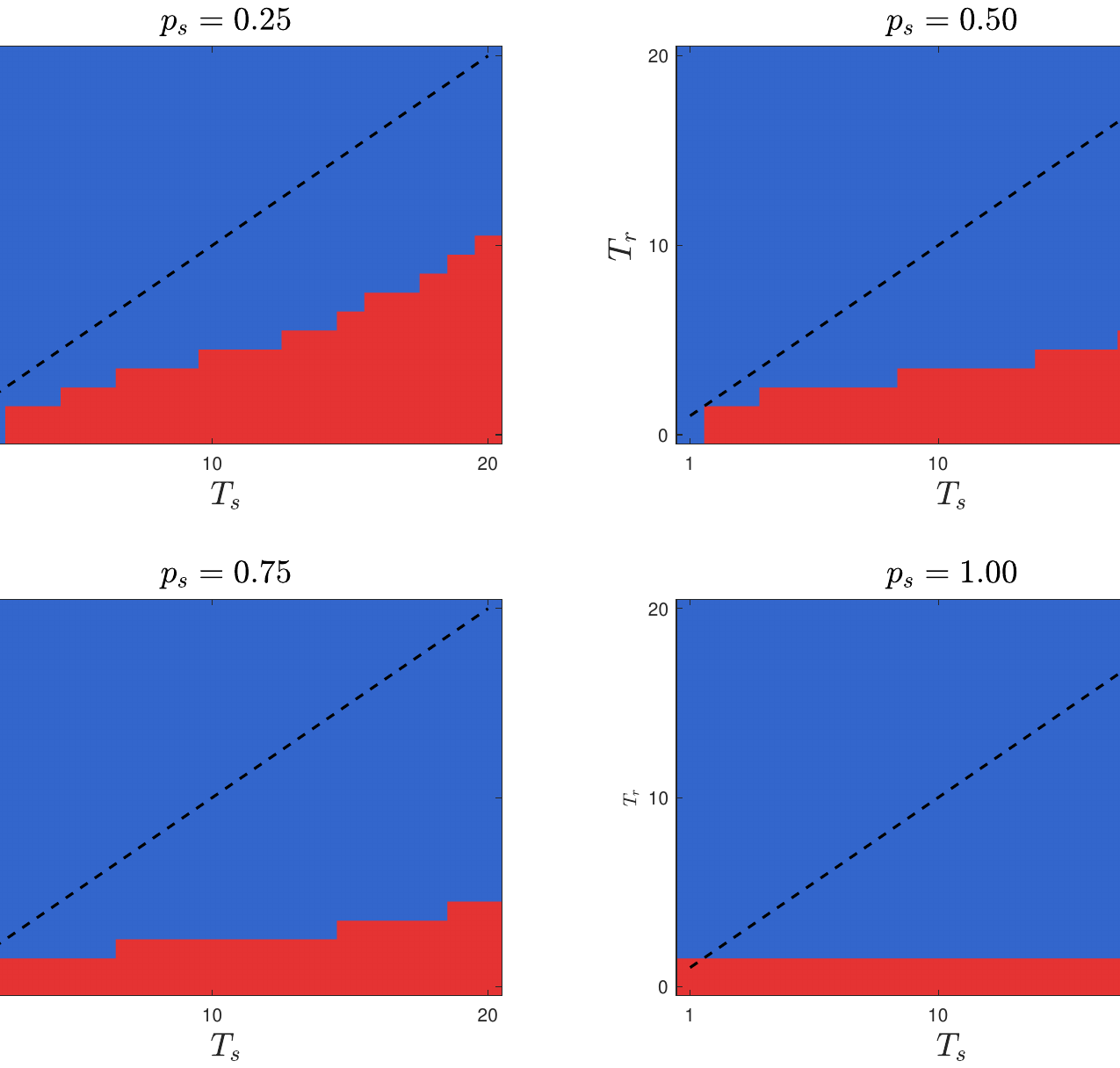}
  \caption{Optimal policy regions for varying $p_s$ values ($K=20$, $c/r=0.5$). 
  \textbf{Blue (shaded) regions:} the sender transmits ($a=1$); 
  \textbf{white regions:} the sender remains silent ($a=0$). 
  The diagonal $T_r = T_s$ marks the boundary above which the receiver 
  already holds fresher data, so transmission is never optimal 
  (Theorem~\ref{thm:global_structure}\,(ii)). As $p_s$ decreases 
  from left to right, the send region expands to compensate for 
  higher packet-loss probability.}
  \label{fig:policy_heatmaps_ps}
\end{figure}

\begin{figure}[t]
  \centering
  \begin{subfigure}[b]{0.49\columnwidth}
    \includegraphics[width=\linewidth, trim=0 210 0 210, clip]{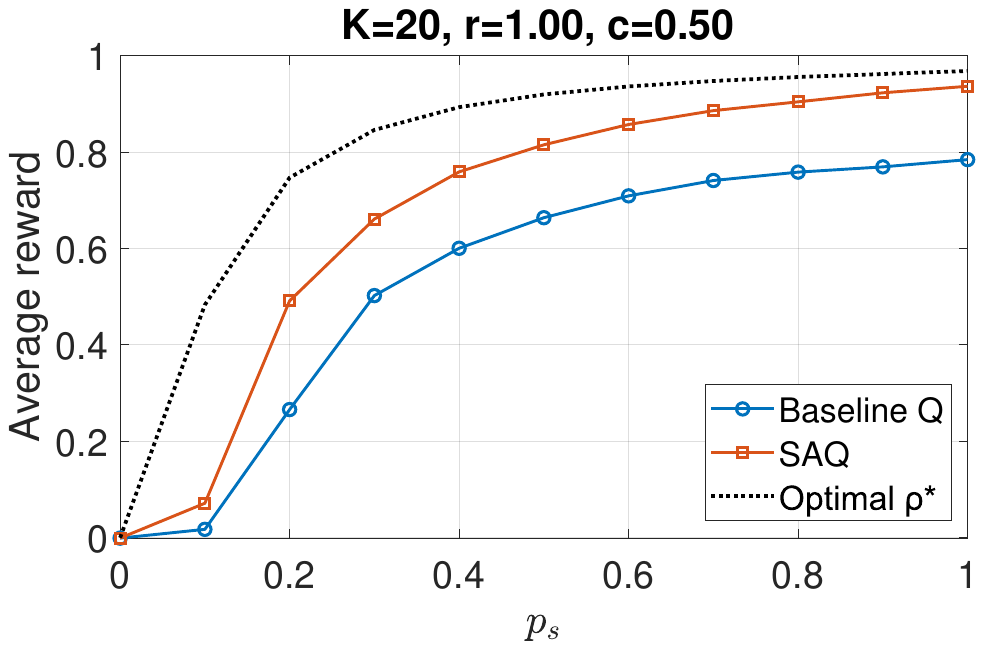}
    \caption{Varying $p_s$ ($c/r=0.5$)}
    \label{fig:sweep_ps}
  \end{subfigure}
  \hfill
  \begin{subfigure}[b]{0.49\columnwidth}
    \includegraphics[width=\linewidth, trim=0 210 0 210, clip]{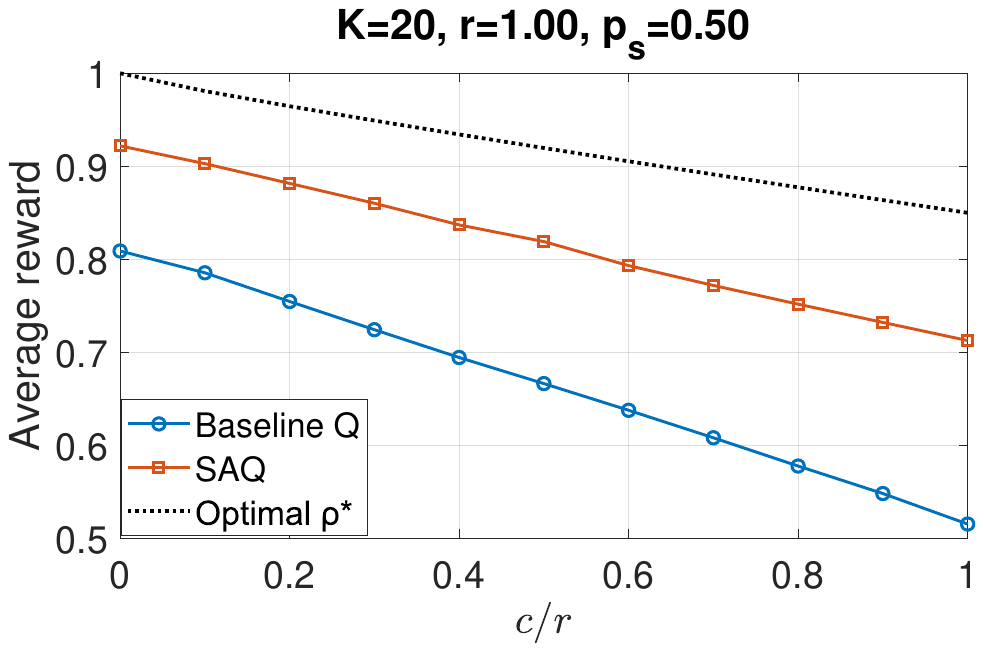}
    \caption{Varying $c/r$ ($p_s=0.5$)}
    \label{fig:sweep_cr}
  \end{subfigure}
  \caption{Average reward vs.\ system parameters for Baseline Q, SAQ, and optimal VI.}
  \label{fig:parameter_sweeps}
\end{figure}
%\begin{figure}[t]
 % \centering
  %\includegraphics[width=0.6\columnwidth]%{Fig6_Policies_vs_cr.eps}
  %\caption{Cumulative average reward: SAQ vs.\ heuristics ($p_s=0.5$, $c/r=0.5$).}
  %\label{fig:policy_comparison}
%\end{figure}
%\begin{figure}[t]
 % \centering
  %\includegraphics[width=0.6\columnwidth]%{Fig6_Policies_vs_cr.eps}
  %\caption{Cumulative average reward: SAQ vs.\ heuristics ($p_s=0.5$, $c/r=0.5$).}
  %\label{fig:policy_comparison}
%\end{figure}
\section{Conclusion}
We studied status updating when samples expire after a finite horizon. A key step is characterizing when a sample loses its value. The scheduling problem reduces to a coupon-collector MDP, and we showed the optimal policy follows a monotone threshold rule, yielding closed-form solutions for constant lifetimes. For random lifetimes with unknown statistics, our structure-aware Q-learning converges faster than standard Q-learning. A natural next step is to identify expiration times for more complex applications and to develop new effectiveness metrics beyond freshness and semantics.
%\newpage
\bibliographystyle{IEEEtran}
\bibliography{bibfile}
\newpage
%================================================================
% APPENDIX
%================================================================
\appendices
% The \section command here creates "A Proof of Theorem..."
% The \label allows the main text to refer to it as "Appendix A"
\section{Proof of Theorem~\ref{thm:optimal-policy}}
\label{app:proof-optimal-policy}

The proof proceeds in five steps: we first establish key structural 
properties of the covariance dynamics, then use a coupling argument 
(exploiting $\beta c \leq 1$) to show that non-admissible policies are 
suboptimal, simplify the objective for admissible policies, show via 
an epoch-ordering argument that JIT achieves the minimum transmission 
rate, and finally compute the JIT value.

\medskip
\noindent\textbf{Step 1: Covariance dominance.}
We first establish that fresher information yields uniformly better 
predictions.

\begin{lemma}[Measurement update reduction]
\label{lem:measurement-update}
For any $R_n \succ 0$, the Kalman measurement update satisfies
$P_{n|n} \preceq P_{n|n-1}$.
\end{lemma}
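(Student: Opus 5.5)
We prove the claim $P_{n|n}\preceq P_{n|n-1}$ directly from the closed form of the Kalman measurement update. Write $P^- \triangleq P_{n|n-1}\succeq 0$ and $S \triangleq H P^- H^\top + R_n$. Since $R_n\succ 0$ and $HP^-H^\top\succeq 0$, we have $S\succ 0$, so $S^{-1}$ exists and is positive definite. The gain is $K_n = P^- H^\top S^{-1}$, and the covariance update is
\begin{equation*}
P_{n|n} = P^- - P^- H^\top S^{-1} H P^- .
\end{equation*}
The plan is to show that the subtracted term is positive semidefinite, which gives the claim.

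\textbf{Positive semidefiniteness of the correction.} For any vector $z\in\mathbb R^d$, set $y = H P^- z$. Using the symmetry of $P^-$,
\begin{equation*}
z^\top P^- H^\top S^{-1} H P^- z = y^\top S^{-1} y \ge 0 .
\end{equation*}
Hence $P^- H^\top S^{-1} H P^- \succeq 0$, and therefore $P_{n|n} = P^- - (\text{PSD}) \preceq P^-$.

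\textbf{Equivalence with other update forms.} In case the paper uses the Joseph form, $P_{n|n} = (I-K_nH)P^-(I-K_nH)^\top + K_nR_nK_n^\top$, we verify the routine identity that this equals the expression above when $K_n$ is the optimal gain. Expanding gives
\begin{equation*}
P^- - K_nHP^- - P^-H^\top K_n^\top + K_nSK_n^\top .
\end{equation*}
Substituting $K_n = P^-H^\top S^{-1}$, the last three terms collapse to $-P^-H^\top S^{-1}HP^-$, recovering the standard form. The same conclusion also follows from the information form: when $P^-\succ 0$, $P_{n|n}^{-1} = (P^-)^{-1} + H^\top R_n^{-1}H \succeq (P^-)^{-1}$. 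Because matrix inversion is operator-monotone decreasing on positive definite matrices, this gives $P_{n|n}\preceq P^-$. The Schur-complement argument above is preferred because it does not require $P^-$ to be invertible.

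\textbf{Main obstacle.} There is no real difficulty; the only care needed is the invertibility of $S$. This is guaranteed by the hypothesis $R_n\succ 0$, which is exactly why the lemma assumes it. As a consequence, taking traces gives $\tr(P_{n|n})\le \tr(P_{n|n-1})$, the form used in the subsequent steps of the proof of Theorem~\ref{thm:optimal-policy}.
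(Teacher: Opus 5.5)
Your proof is correct and takes the same approach as the paper. Both write $P_{n|n} = P_{n|n-1} - P_{n|n-1}H^\top S_n^{-1} H P_{n|n-1}$ with $S_n \succ 0$ and note that the subtracted term is a congruence of $S_n^{-1}$, hence positive semidefinite; you make that step explicit through the quadratic form $y^\top S^{-1} y \ge 0$, and your Joseph-form and information-form remarks are correct but not needed.
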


\begin{proof}
The measurement update is
$P_{n|n} = P_{n|n-1} - P_{n|n-1} H^\top S_n^{-1} H P_{n|n-1}$, where 
$S_n = H P_{n|n-1} H^\top + R_n \succ 0$. This is a congruence 
transformation of $S_n^{-1} \succ 0$, hence positive semidefinite.
\end{proof}

\begin{lemma}[Covariance dominance]
\label{lem:cov-dominance}
For any transmission epochs $m < n$ and any prediction horizon $k \geq 0$,
\begin{equation}\label{eq:cov-dominance}
    P_{n+k|n} \preceq P_{n+k|m}.
\end{equation}
\end{lemma}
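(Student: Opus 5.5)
The argument combines two monotonicity facts about the covariance recursions: the measurement update never increases covariance (Lemma~\ref{lem:measurement-update}), and the open-loop Riccati prediction map preserves the Loewner order. The plan is to show that the sensor's filtered covariance at the later epoch $n$ is dominated by the controller's open-loop prediction from the earlier epoch $m$. Then the ordering is propagated forward $k$ further steps.

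First, define the prediction map $f(P) \triangleq APA^\top + Q$. It is monotone in the Loewner order: if $P \preceq P'$, then for every vector $z$ we have $z^\top A(P'-P)A^\top z = (A^\top z)^\top (P'-P)(A^\top z) \ge 0$, so $f(P) \preceq f(P')$. By induction, the $j$-fold composition $f^{j}$ is also monotone. By definition of the open-loop covariance, $P_{t+j|t} = f^{j}(P_{t|t})$ for any epoch $t$.

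Second, the base case $k=0$ requires $P_{n|n} \preceq P_{n|m}$. Here $P_{n|m} = f^{\,n-m}(P_{m|m})$ is the controller covariance with no updates after $m$. The sensor's filtered covariance $P_{n|n}$ is obtained from the same starting point by alternating prediction steps $f$ with measurement updates at every intermediate slot $m+1,\dots,n$. The comparison proceeds by induction on $t \in \{m,\dots,n\}$, with the claim $P_{t|t} \preceq f^{\,t-m}(P_{m|m})$. The case $t=m$ is equality. For the step, first $P_{t+1|t} = f(P_{t|t}) \preceq f^{\,t+1-m}(P_{m|m})$ by monotonicity of $f$. Then $P_{t+1|t+1} \preceq P_{t+1|t}$ by Lemma~\ref{lem:measurement-update}, and transitivity of $\preceq$ closes the step. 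Taking $t=n$ gives $P_{n|n} \preceq P_{n|m}$.

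Third, for general $k$, apply monotone $f^{k}$ to the base inequality: $P_{n+k|n} = f^{k}(P_{n|n}) \preceq f^{k}(P_{n|m}) = f^{k+n-m}(P_{m|m}) = P_{n+k|m}$.

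The main subtlety is the indexing in the base case, namely making sure that $P_{n|m}$ really denotes pure open-loop propagation from the epoch-$m$ filtered covariance. The sensor's Kalman filter runs continuously regardless of transmissions, so $P_{m|m}$ is the same starting covariance on both sides. Once that is fixed, the result is two short inductions using only Loewner monotonicity and Lemma~\ref{lem:measurement-update}. The time-varying $R_t$ causes no difficulty, since Lemma~\ref{lem:measurement-update} holds for each $R_t \succ 0$ individually.
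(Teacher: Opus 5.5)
Your proof is correct and uses the same two ingredients as the paper's: Loewner monotonicity of the prediction map $P \mapsto APA^\top + Q$ (iterated as $f^k$, which the paper writes out as the Lyapunov sum) and the measurement-update reduction of Lemma~\ref{lem:measurement-update}. The only difference is how the induction is organized. The paper inducts on the gap $n-m$ for general horizon $k$, invoking the hypothesis at horizon $k+1$; you first settle $k=0$ by induction over the intermediate slots and then push forward by $f^k$, which is an equivalent arrangement.
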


\begin{proof}
We proceed by induction on $n - m$.

\emph{Base case} ($n = m + 1$): By Lemma~\ref{lem:measurement-update}, 
$P_{m+1|m+1} \preceq P_{m+1|m}$. For any $k \geq 0$, the Lyapunov 
recursion gives
\begin{align}\label{eq:matrix-bound-compact}
P_{m+1+k|m+1} &= A^k P_{m+1|m+1} (A^k)^\top 
+ \sum_{\ell=0}^{k-1} A^\ell Q (A^\ell)^\top \\
&\preceq A^k P_{m+1|m} (A^k)^\top 
+ \sum_{\ell=0}^{k-1} A^\ell Q (A^\ell)^\top \nonumber \\
&= P_{m+1+k|m}.
\end{align}
where the inequality uses $X \preceq Y \Rightarrow A^k X (A^k)^\top \preceq A^k Y (A^k)^\top$.

\emph{Inductive step}: Suppose the result holds for gap $n - m$. 
For gap $(n+1) - m$, apply the base case with $n$ in place of $m+1$:
$P_{n+1+k|n+1} \preceq P_{n+1+k|n}$.
By the inductive hypothesis with horizon $k + 1$:
$P_{n+1+k|n} = P_{n+(k+1)|n} \preceq P_{n+(k+1)|m} = P_{n+1+k|m}$.
Combining yields $P_{n+1+k|n+1} \preceq P_{n+1+k|m}$.
\end{proof}

\begin{lemma}[Absolute expiration ordering]
\label{lem:absolute-ordering}
For any transmission epochs $m < n$, the absolute expiration times satisfy
\begin{equation}\label{eq:absolute-ordering}
    n + T^{\mathrm{MSE}}_n \geq m + T^{\mathrm{MSE}}_m.
\end{equation}
\end{lemma}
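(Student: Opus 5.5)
The plan is a direct argument by contradiction that combines Lemma~\ref{lem:cov-dominance} with Definition~\ref{def:expiration}. Fix transmission epochs $m<n$ and write $E_m \triangleq m+T^{\mathrm{MSE}}_m$ for the absolute expiration slot of the sample sent at $m$. Suppose, for contradiction, that $E_n < E_m$, and set $j \triangleq E_n$. By definition $T^{\mathrm{MSE}}_n \ge 1$, so $j \ge n+1 > m$.

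Since $T^{\mathrm{MSE}}_n = j-n$ is the first violation after $n$, we have $\tr(P_{j|n}) > \tau_j$. Apply Lemma~\ref{lem:cov-dominance} with horizon $k=j-n\ge 0$ to get $P_{j|n} \preceq P_{j|m}$. The trace is monotone under the Loewner order, because $Y-X\succeq 0$ implies $\tr(Y-X)\ge 0$. Hence $\tr(P_{j|m}) \ge \tr(P_{j|n}) > \tau_j$.

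The horizon from $m$ is $k' = j-m \ge 1$, and by assumption $k' < T^{\mathrm{MSE}}_m$. But $T^{\mathrm{MSE}}_m$ is the \emph{minimal} $k\ge1$ with $\tr(P_{m+k|m})>\tau_{m+k}$. Since $k'=j-m$ satisfies that condition, $T^{\mathrm{MSE}}_m \le j-m$, which contradicts $k'<T^{\mathrm{MSE}}_m$. Therefore $E_n \ge E_m$.

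Two edge cases need handling. If $T^{\mathrm{MSE}}_n=+\infty$, the inequality is trivial. If $T^{\mathrm{MSE}}_m=+\infty$, the same argument applies to every $j>n$, because $\tr(P_{j|n})>\tau_j$ at any finite $j$ would force $\tr(P_{j|m})>\tau_j$, contradicting $T^{\mathrm{MSE}}_m=\infty$. So $T^{\mathrm{MSE}}_n=\infty$ as well. Both cases use the convention $\min\varnothing=+\infty$.

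The only delicate step is making sure Lemma~\ref{lem:cov-dominance} is used with matching indices. The predictions $P_{j|n}$ and $P_{j|m}$ must be compared at the same absolute slot $j$ and against the same threshold $\tau_j$. Here the time-varying threshold causes no difficulty, because both sides are evaluated at the same $j$. The rest is a direct consequence of the minimality in the definition.
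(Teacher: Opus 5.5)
Your proof is correct and is essentially the paper's argument in contrapositive form. The paper shows directly, via Lemma~\ref{lem:cov-dominance} and trace monotonicity, that every slot $n+j \le m+T^{\mathrm{MSE}}_m-1$ satisfies the constraint under the transmission at $n$; you show instead that the first violation after $n$ would also be a violation for the sample sent at $m$, contradicting minimality, and you additionally handle the $+\infty$ cases, which the paper leaves implicit.
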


\begin{proof}
By definition of $T^{\mathrm{MSE}}_m$, we have 
$\tr(P_{m+j|m}) \leq \tau_{m+j}$ for $j = 1, \ldots, T^{\mathrm{MSE}}_m - 1$.

For any slot $n + j$ with $n + j \leq m + T^{\mathrm{MSE}}_m - 1$, 
setting $\ell = (n - m) + j$ gives $m + \ell = n + j$ and 
$\ell \leq T^{\mathrm{MSE}}_m - 1$. By Lemma~\ref{lem:cov-dominance}:
\[
    \tr(P_{n+j|n}) \leq \tr(P_{n+j|m}) = \tr(P_{m+\ell|m}) \leq \tau_{m+\ell} = \tau_{n+j}.
\]
Thus the constraint is satisfied at slot $n + j$ under transmission at $n$ 
for all $j = 1, \ldots, (m + T^{\mathrm{MSE}}_m - 1) - n$.

By definition of $T^{\mathrm{MSE}}_n$:
$T^{\mathrm{MSE}}_n \geq (m + T^{\mathrm{MSE}}_m) - n$,
which rearranges to \eqref{eq:absolute-ordering}.
\end{proof}

\medskip
\noindent\textbf{Step 2: Non-admissible policies are suboptimal ($\beta c \leq 1$).}

Given any non-admissible policy $\pi$, we construct an admissible 
policy $\pi'$ satisfying $J(\pi') > J(\pi)$, then invoke Steps~3--4 
to conclude $J(\pi^{\mathrm{JIT}}) \geq J(\pi') > J(\pi)$.

  Define $\pi'$ slot by slot: set 
$a_n^{\pi'} = 1$ if $a_n^{\pi} = 1$, or if $a_n^{\pi} = 0$ and 
$\tr(P^{\mathrm c}_n(\pi')) > \tau_n$.  That is, $\pi'$ mimics $\pi$ 
but additionally transmits at every slot where its own controller 
covariance would otherwise violate~\eqref{eq:MSEcap}.  By 
construction $\tr(P^{\mathrm c}_n(\pi')) \leq \tau_n$ for all~$n$, 
so $\pi'$ is admissible.

  Run $\pi$ and $\pi'$ on the same realization of 
$\{R_n\}$, $\{\tau_n\}$, and $\{\mathbf{w}_n\}$.  Because the 
Kalman filter operates at the sensor independently of the 
transmission decisions, both policies produce the same $P_{n|n}$.
Let $m_n^{\pi}$ and $m_n^{\pi'}$ denote the most recent 
transmission epoch at slot~$n$ under each policy.  Since $\pi'$ 
transmits at least as often as $\pi$, we have 
$m_n^{\pi'} \geq m_n^{\pi}$ for all~$n$.  
Lemma~\ref{lem:cov-dominance} then gives 
$P^{\mathrm c}_n(\pi') = P_{n|m_n^{\pi'}} 
\preceq P_{n|m_n^{\pi}} = P^{\mathrm c}_n(\pi)$, 
so
\begin{equation}\label{eq:coupling-dominance}
\tr(P^{\mathrm c}_n(\pi)) \leq \tau_n 
\;\;\Longrightarrow\;\;
\tr(P^{\mathrm c}_n(\pi')) \leq \tau_n.
\end{equation}

Write  $r_n(\cdot) = \mathds{1}\{\tr(P^{\mathrm c}_n(\cdot)) 
\leq \tau_n\} - \beta c\,a_n(\cdot)$ for the per-slot reward.
We compare $r_n(\pi')$ and $r_n(\pi)$ in three exhaustive cases.
\begin{enumerate}[label=(\roman*),nosep,leftmargin=1.4em]
    \item $\tr(P^{\mathrm c}_n(\pi)) \leq \tau_n$: 
          By \eqref{eq:coupling-dominance}, 
          $\tr(P^{\mathrm c}_n(\pi')) \leq \tau_n$ as well.  
          Since $\pi'$ copies $\pi$'s action in this case 
          ($a_n^{\pi'} = a_n^{\pi}$), we have 
          $r_n(\pi') = r_n(\pi)$.
    \item $\tr(P^{\mathrm c}_n(\pi)) > \tau_n$ and 
          $\tr(P^{\mathrm c}_n(\pi')) \leq \tau_n$: 
          The constraint is violated under~$\pi$ but satisfied 
          under~$\pi'$.  Because $\tr(P^{\mathrm c}_n(\pi)) > \tau_n$ 
          and Assumption~\ref{as:feasibility} guarantees that 
          transmitting restores the constraint, $\pi$ must be 
          silent at slot~$n$ (otherwise the constraint would hold), 
          so $r_n(\pi) = 0$.  
          Policy~$\pi'$ is also silent 
          ($a_n^{\pi} = 0$ and $\tr(P^{\mathrm c}_n(\pi')) \leq \tau_n$), 
          giving $r_n(\pi') = 1$.  Hence $r_n(\pi') - r_n(\pi) = 1$.
    \item $\tr(P^{\mathrm c}_n(\pi)) > \tau_n$ and 
          $\tr(P^{\mathrm c}_n(\pi')) > \tau_n$: 
          By construction $\pi'$ transmits ($a_n^{\pi'} = 1$).  
          After transmission, 
          $\tr(P_{n|n}) \leq \tau_n$ 
          (Assumption~\ref{as:feasibility}), so 
          $r_n(\pi') = 1 - \beta c \geq 0$ since $\beta c \leq 1$.
          Under~$\pi$: if $a_n^{\pi} = 1$, then 
          $r_n(\pi) = 1 - \beta c = r_n(\pi')$; if 
          $a_n^{\pi} = 0$, then $r_n(\pi) = 0 \leq r_n(\pi')$.
\end{enumerate}
In every case $r_n(\pi') \geq r_n(\pi)$.  It remains to show the 
inequality is strict on average.  Because $\pi$ is non-admissible, 
there exists a slot~$n_0$ at which 
$\tr(P^{\mathrm c}_{n_0}(\pi)) > \tau_{n_0}$ and $a_{n_0}^{\pi} = 0$.
At $n_0$, either case~(ii) or case~(iii) applies.  If case~(iii), 
$\pi'$ transmits at~$n_0$; the fresh covariance $P_{n_0|n_0}$ 
satisfies $\tr(P_{n_0|n_0}) \leq \tau_{n_0}$, so at the next 
slot~$n_0{+}1$ we have $\tr(P^{\mathrm c}_{n_0+1}(\pi')) \leq 
\tau_{n_0+1}$ (at least for one step of open-loop prediction 
within the expiration window).  Meanwhile $\pi$, which did not 
transmit at~$n_0$, may still violate the constraint at~$n_0{+}1$.  
This produces a case~(ii) slot at~$n_0{+}1$ with 
$r_{n_0+1}(\pi') - r_{n_0+1}(\pi) = 1 > 0$.  
Since such slots recur with positive frequency, 
$J(\pi') > J(\pi)$.

Because $\pi'$ is admissible, Steps~3--4 give
\[
    J(\pi^{\mathrm{JIT}}) \geq J(\pi') > J(\pi).  \qedhere
\]

\medskip
\noindent\textbf{Step 3: Objective simplification for admissible policies.}

For any admissible policy $\pi$, we have $\Delta_n \geq 1$ for all $n$, 
so $\tr(P^{\mathrm c}_n) \leq \tau_n$ always. Every slot earns reward 1, 
and the objective \eqref{eq:objective} simplifies to:
\begin{equation}\label{eq:admissible-objective}
    J(\pi) = \liminf_{N \to \infty} \frac{1}{N} \mathbb{E}\!\left[
    \sum_{n=0}^{N-1} (1 - \beta c\, a_n) \,\Big|\, P_{0|0}\right]
    = 1 - \beta c \cdot \bar{a}(\pi),
\end{equation}
where $\bar{a}(\pi) \triangleq \limsup_{N \to \infty} 
\frac{1}{N} \mathbb{E}[\sum_{n=0}^{N-1} a_n\mid P_{0|0}]$ 
is the long-run transmission rate. 

Maximizing $J(\pi)$ over admissible policies is thus equivalent to 
minimizing $\bar{a}(\pi)$.

\medskip
\noindent\textbf{Step 4: JIT achieves the minimum transmission rate.}

We show that no admissible policy transmits less frequently than JIT.

Let $\{m_k^{\mathrm{JIT}}\}_{k \geq 1}$ and $\{m_k^\pi\}_{k \geq 1}$ 
denote the transmission epochs under JIT and any admissible policy $\pi$, 
respectively.

\begin{lemma}[Transmission epoch ordering]
\label{lem:epoch-ordering}
For all $k \geq 1$: $m_k^\pi \leq m_k^{\mathrm{JIT}}$.
\end{lemma}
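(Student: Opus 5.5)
We argue by induction on $k$. The guiding idea is that JIT postpones every transmission as long as admissibility permits. An admissible policy can therefore never be behind JIT in its $k$-th transmission, because JIT's $k$-th epoch is the latest admissible time for a $k$-th transmission given an earlier or equal $(k{-}1)$-th one. Throughout, both policies run on the same realization of $\{R_n\}$ and $\{\tau_n\}$. The sensor-side covariances $P_{n|n}$, and hence the expiration times $T^{\mathrm{MSE}}_m$ attached to each candidate epoch $m$, therefore coincide for both policies. We also use the characterization that a policy with last epoch $m$ is admissible at slot $n$ iff $n < m + T^{\mathrm{MSE}}_m$, i.e.\ $\Delta_n \ge 1$.

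\emph{Base case.} Both policies start from the same initial information $P_{0|0}$, which expires at an absolute time $E_0$. JIT transmits for the first time exactly at the slot $m_1^{\mathrm{JIT}} = E_0 - 1$ where $\Delta_n = 1$. Any admissible $\pi$ must transmit at or before this slot: otherwise at slot $E_0$ its controller still uses the initial information, $\Delta_{E_0} \le 0$, and the constraint is violated. Hence $m_1^\pi \le m_1^{\mathrm{JIT}}$.

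\emph{Inductive step.} Assume $m_k^\pi \le m_k^{\mathrm{JIT}}$. By Lemma~\ref{lem:absolute-ordering}, applied with $m = m_k^\pi$ and $n = m_k^{\mathrm{JIT}}$ (the claim is trivial if they are equal),
\[
m_k^\pi + T^{\mathrm{MSE}}_{m_k^\pi} \le m_k^{\mathrm{JIT}} + T^{\mathrm{MSE}}_{m_k^{\mathrm{JIT}}}.
\]
By definition, JIT's next epoch is $m_{k+1}^{\mathrm{JIT}} = m_k^{\mathrm{JIT}} + T^{\mathrm{MSE}}_{m_k^{\mathrm{JIT}}} - 1$. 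Suppose $\pi$ did not transmit again before $m_k^\pi + T^{\mathrm{MSE}}_{m_k^\pi}$. Then at that slot $\Delta_n \le 0$, contradicting admissibility. Hence
\[
m_{k+1}^\pi \le m_k^\pi + T^{\mathrm{MSE}}_{m_k^\pi} - 1 \le m_{k+1}^{\mathrm{JIT}}.
\]

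The main obstacle is the bookkeeping of the off-by-one conventions. We must check that ``transmit when $\Delta_n = 1$'' places the next JIT epoch exactly at $m + T^{\mathrm{MSE}}_m - 1$, consistently with the strict inequality $k \ge 1$ in Definition~\ref{def:expiration} and with whether the transmission at slot $n$ refreshes $P^{\mathrm c}_n$ itself or only the following slots. The other subtle point is the application of Lemma~\ref{lem:absolute-ordering}. That lemma is stated for transmission epochs of a single policy, so we will note that $T^{\mathrm{MSE}}_m$ depends only on $m$ and the common realization, not on which policy transmitted. This justifies comparing epochs of different policies. Once the lemma is in hand, it yields $\bar a(\pi) \ge \bar a(\pi^{\mathrm{JIT}})$ pathwise, since JIT's $N$-th transmission occurs no earlier than $\pi$'s.
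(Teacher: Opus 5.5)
Your proof is correct and is essentially the paper's argument: induction on $k$, a base case forced by expiry of the initial estimate, and an inductive step that chains $\pi$'s admissibility deadline after $m_k^\pi$ through Lemma~\ref{lem:absolute-ordering} to JIT's next epoch. Your remark that $T^{\mathrm{MSE}}_m$ depends only on $m$ and the common realization is exactly what licenses comparing epochs of two different policies, which the paper uses only implicitly. The remaining difference is the off-by-one convention. The paper sets $m_k^{\mathrm{JIT}} = m_{k-1}^{\mathrm{JIT}} + T^{\mathrm{MSE}}_{m_{k-1}^{\mathrm{JIT}}}$, so a transmission refreshes its own slot. That choice also avoids the degenerate case $T^{\mathrm{MSE}}=1$, where your $-1$ version would make JIT's next epoch coincide with its current one. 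Since the same offset shifts both policies' deadlines, your chain of inequalities is unaffected.
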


\begin{proof}
We proceed by induction on $k$.

\emph{Base case} ($k = 1$): Both policies must transmit before the 
initial estimate expires. Policy $\pi$ may transmit at or before the 
moment JIT transmits (possibly earlier if $\pi$ transmits ``early'' 
when $\Delta > 1$). Thus $m_1^\pi \leq m_1^{\mathrm{JIT}}$.

\emph{Inductive step}: Suppose $m_{k-1}^\pi \leq m_{k-1}^{\mathrm{JIT}}$.

By Lemma~\ref{lem:absolute-ordering}:
\begin{equation}\label{eq:absolute-ordering-applied}
    m_{k-1}^{\mathrm{JIT}} + T^{\mathrm{MSE}}_{m_{k-1}^{\mathrm{JIT}}} 
    \geq m_{k-1}^\pi + T^{\mathrm{MSE}}_{m_{k-1}^\pi}.
\end{equation}

Under JIT, the next transmission occurs exactly at expiration:
\[
    m_k^{\mathrm{JIT}} = m_{k-1}^{\mathrm{JIT}} + T^{\mathrm{MSE}}_{m_{k-1}^{\mathrm{JIT}}}.
\]

Under $\pi$, the next transmission occurs at or before expiration 
(either at $\Delta = 1$ by necessity, or earlier by choice):
\[
    m_k^\pi \leq m_{k-1}^\pi + T^{\mathrm{MSE}}_{m_{k-1}^\pi}.
\]

Combining with \eqref{eq:absolute-ordering-applied}:
\[
    m_k^\pi \leq m_{k-1}^\pi + T^{\mathrm{MSE}}_{m_{k-1}^\pi} 
    \leq m_{k-1}^{\mathrm{JIT}} + T^{\mathrm{MSE}}_{m_{k-1}^{\mathrm{JIT}}} 
    = m_k^{\mathrm{JIT}}.
\]
\end{proof}

Lemma~\ref{lem:epoch-ordering} implies that for any $N$:
\[
    \bigl|\{k : m_k^\pi \leq N\}\bigr| \geq 
    \bigl|\{k : m_k^{\mathrm{JIT}} \leq N\}\bigr|.
\]
That is, $\pi$ incurs at least as many transmissions as JIT by time $N$. 
Taking limits:
\begin{equation}\label{eq:rate-ordering}
    \bar{a}(\pi) \geq \bar{a}(\pi^{\mathrm{JIT}}),
\end{equation}
with equality if and only if $\pi$ never transmits early (i.e., 
$\pi \equiv \pi^{\mathrm{JIT}}$ on the event that both are admissible).

\medskip
\noindent\textbf{Step 5: JIT value computation.}

Under $\pi^{\mathrm{JIT}}$, transmissions occur at epochs 
$m_1 < m_2 < \cdots$ with inter-transmission times
\[
    L_k \triangleq m_{k+1} - m_k = T^{\mathrm{MSE}}_{m_k}.
\]
Each cycle $k$ has:
\begin{itemize}[nosep]
    \item Length: $L_k = T^{\mathrm{MSE}}_{m_k}$ slots,
    \item Reward: $T^{\mathrm{MSE}}_{m_k}$ (all slots valid) minus 
          $\beta c$ (one transmission),
    \item Net cycle reward: $T^{\mathrm{MSE}}_{m_k} - \beta c$.
\end{itemize}

The transmission process forms a renewal process with cycle lengths 
$\{L_k\}$. By the renewal-reward theorem:
\begin{equation}\label{eq:jit-renewal}
    J(\pi^{\mathrm{JIT}}) = \frac{\mathbb{E}[\text{cycle reward}]}{\mathbb{E}[\text{cycle length}]}
    = \frac{\mathbb{E}[T^{\mathrm{MSE}}_{m_k}] - \beta c}{\mathbb{E}[T^{\mathrm{MSE}}_{m_k}]}
    = 1 - \frac{\beta c}{\bar{T}},
\end{equation}
where $\bar{T} = \lim_{K \to \infty} \frac{1}{K} \sum_{k=1}^{K} T^{\mathrm{MSE}}_{m_k}$ 
equals $\mathbb{E}[T^{\mathrm{MSE}}_{m_k}]$ by the ergodic theorem when 
the limit exists.

The condition $\beta c < \bar{T}$ ensures 
$J(\pi^{\mathrm{JIT}}) > 0$.

\medskip
\noindent\textbf{Conclusion.}

Combining Steps 2--5:
\begin{itemize}[nosep]
    \item Non-admissible policies achieve strictly lower value than 
          $1 - \beta c / \bar{T}$ (Step 2).
    \item Among admissible policies, JIT achieves the minimum transmission 
          rate (Step 4) and hence the maximum value (Step 3).
    \item The JIT value is $1 - \beta c / \bar{T}$ (Step 5).
\end{itemize}
Therefore, $\pi^{\mathrm{JIT}}$ is optimal among all causal policies. \hfill \qed
\section{Proof of Lemma~\ref{lem:monotone}}
\label{app:proof-monotone}

We prove both inequalities by induction on the value iteration sequence.
Define \(\{V^{(k)}\}_{k\ge0}\) by \(V^{(0)}\equiv0\) and
\[
  V^{(k+1)}(T_r,T_s) = \max_{a\in\{0,1\}} Q^{(k+1)}(T_r,T_s,a),
\]
where the Q-function update is
\begin{align*}
Q^{(k+1)}(T_r,T_s,a) &= \bar{R}(s,a) + P_a^{\mathrm{succ}}\,\mathbb{E}_{T_s'}[V^{(k)}(T_s, T_s')] \\
&\quad + P_a^{\mathrm{fail}}\,\mathbb{E}_{T_s'}[V^{(k)}((T_r{-}1)^+, T_s')],
\end{align*}
with $T'_s = \max\{T_s-1, \tilde{T}_s\}$, $\tilde{T}_s \sim p_T$, and
$(P_a^{\mathrm{succ}}, P_a^{\mathrm{fail}}) = (p_s, 1{-}p_s)$ if $a=1$, and $(0,1)$ if $a=0$.

\medskip
\noindent\textbf{Monotonicity in $T_r$.}
Define $\Delta^{(k)}(T_r,T_s,a) := Q^{(k)}(T_r{+}1,T_s,a) - Q^{(k)}(T_r,T_s,a)$.

\emph{Induction hypothesis:} $\Delta^{(k)}(T_r,T_s,a) \ge 0$ for all $(T_r,T_s)$ with $T_r < K$ and all $a$.
The base case $k=0$ holds since $Q^{(0)} \equiv 0$.

\emph{Inductive step:} For action $a=1$:
\begin{align*}
\Delta^{(k+1)}&(T_r,T_s,1) \\
&= \bigl[\bar{R}((T_r{+}1,T_s),1) - \bar{R}((T_r,T_s),1)\bigr] \\
&\quad + (1{-}p_s)\mathbb{E}_{T'_s}\bigl[V^{(k)}(T_r, T'_s) - V^{(k)}((T_r{-}1)^+, T'_s)\bigr].
\end{align*}
The reward difference equals $r(1{-}p_s)\bigl[\mathds{1}\{T_r \ge 1\} - \mathds{1}\{T_r > 1\}\bigr] \ge 0$.
The bracketed expectation is non-negative by the induction hypothesis.
Hence $\Delta^{(k+1)}(T_r,T_s,1) \ge 0$.

For action $a=0$:
\begin{align*}
\Delta^{(k+1)}&(T_r,T_s,0) \\
&= \bigl[\bar{R}((T_r{+}1,T_s),0) - \bar{R}((T_r,T_s),0)\bigr] \\
&\quad + \mathbb{E}_{T'_s}\bigl[V^{(k)}(T_r, T'_s) - V^{(k)}((T_r{-}1)^+, T'_s)\bigr].
\end{align*}
The reward difference is $r\bigl[\mathds{1}\{T_r \ge 1\} - \mathds{1}\{T_r > 1\}\bigr] \ge 0$.
Again, the expectation is non-negative by induction.
Hence $\Delta^{(k+1)}(T_r,T_s,0) \ge 0$.

Since $V^{(k)}(T_r,T_s) = \max_a Q^{(k)}(T_r,T_s,a)$ and the max of non-decreasing 
functions is non-decreasing, we have $V^{(k)}(T_r,T_s) \le V^{(k)}(T_r{+}1,T_s)$.
Taking $k \to \infty$ yields $V(T_r,T_s) \le V(T_r{+}1,T_s)$.

\medskip
\noindent\textbf{Monotonicity in $T_s$.}
An analogous argument, using the fact that larger $T_s$ yields 
$T'_s = \max\{T_s-1, \tilde{T}_s\}$ stochastically larger and 
better next-state values upon success, establishes 
$V(T_r,T_s) \le V(T_r,T_s{+}1)$. \hfill\qed
\section{Proof of Theorem~\ref{thm:row_threshold}}
\label{app:proof-threshold}
The proof relies on monotonicity properties of the action gap 
$\Delta Q(T_r, T_s) = Q(T_r, T_s, 1) - Q(T_r, T_s, 0)$.

\begin{lemma}[Action gap monotonicity]
\label{lem:gap-monotonicity}
$\Delta Q(T_r, T_s)$ is non-increasing in $T_r$ and non-decreasing in $T_s$.
\end{lemma}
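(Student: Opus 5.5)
The plan is to work with an explicit formula for $\Delta Q$ obtained from the Bellman operator, and then read off the two monotonicity claims. Take $Q(T_r,T_s,a)$ as the fixed-point form of the update used in the proof of Lemma~\ref{lem:monotone}. Both actions lead to the same next sender state $T'_s=\max\{T_s-1,\tilde T_s\}$, and upon success the receiver earns $r$ because $T_s\ge 1$. Subtracting the two actions therefore cancels the common failure branch, leaving
\begin{equation*}
\Delta Q(T_r,T_s) = -c + p_s\Bigl[\, r\,\mathds{1}\{T_r\le 1\} + \mathbb{E}_{T'_s}\bigl[V(T_s,T'_s)\bigr] - \mathbb{E}_{T'_s}\bigl[V((T_r-1)^+,T'_s)\bigr]\Bigr],
\end{equation*}
where the expectation is over $T'_s$ given $T_s$ only. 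I will first verify this identity carefully, including the boundary cases $T_r\in\{0,1\}$ where the indicator $\mathds{1}\{T'_r>0\}$ changes.

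\emph{Monotonicity in $T_r$.} In the formula above, $T_r$ enters only through $r\,\mathds{1}\{T_r\le 1\}$, which is non-increasing, and through $-\mathbb{E}[V((T_r-1)^+,T'_s)]$. The map $T_r\mapsto (T_r-1)^+$ is non-decreasing, and Lemma~\ref{lem:monotone} says $V$ is non-decreasing in its first argument. So the second term is non-increasing in $T_r$, and hence so is $\Delta Q$. This part should be immediate.

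\emph{Monotonicity in $T_s$.} Couple the next sender states through a common arrival: $Y=\max\{T_s-1,\tilde T_s\}$ and $Y^+=\max\{T_s,\tilde T_s\}$, so $Y^+\ge Y$ pointwise. Fix $x=(T_r-1)^+$. I need
\begin{equation*}
\mathbb{E}\bigl[V(T_s+1,Y^+)-V(x,Y^+)\bigr]\ \ge\ \mathbb{E}\bigl[V(T_s,Y)-V(x,Y)\bigr].
\end{equation*}
I will split the left-hand side as $[V(T_s+1,Y^+)-V(T_s,Y^+)]+[V(T_s,Y^+)-V(x,Y^+)]$. The first bracket is $\ge 0$ by Lemma~\ref{lem:monotone}. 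For the second bracket I need $V(T_s,y)-V(x,y)$ to be non-decreasing in $y$ when $T_s\ge x$; that is, $V$ must be supermodular (have increasing differences) on the lattice. The case $T_s<x$ needs a separate check, possibly using the reverse comparison or the fact that transmitting there is dominated.

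\emph{Main obstacle.} The crux is establishing this supermodularity of $V$. I would prove it by induction along value iteration, in parallel with Lemma~\ref{lem:monotone}. The step is to show that the Bellman operator maps supermodular, monotone $V^{(k)}$ into supermodular $Q^{(k+1)}(\cdot,\cdot,a)$ for each $a$. This follows from the transition structure: the receiver coordinate shifts deterministically, the sender coordinate moves through the monotone coupling above, and increasing differences are preserved under expectations of monotone couplings. The delicate point is that a pointwise maximum of supermodular functions need not be supermodular. To handle the $\max_a$, I would first try a joint induction that carries the action-gap monotonicity itself as part of the hypothesis, namely $\Delta Q^{(k)}$ non-increasing in $T_r$ and non-decreasing in $T_s$. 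With that hypothesis the optimizing action is monotone in the state, and the standard Topkis-type argument then shows that the maximum of the two supermodular branches remains supermodular. If this fails in the boundary region $T_r=0$, I would fall back on a sample-path coupling argument instead.
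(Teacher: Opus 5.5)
Your explicit formula for $\Delta Q$ and your $T_r$-monotonicity argument coincide with the paper's. For the $T_s$-part you are right that monotonicity of $V$ alone does not settle the sign. The paper's chain at this step invokes only Lemma~\ref{lem:monotone}: its first ``$\ge$'' is actually an identity, and the final ``$\ge 0$'' (two nonnegative brackets minus a third) is asserted rather than derived.

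\textbf{The supermodularity step fails.} The property you propose to supply is false: $V$ is not supermodular. Take $r=1$, $p_s=\tfrac12$, $c=\tfrac14$, with any $K\ge 2$ and any $p_T$. Then $V^{(1)}(T_r,T_s)$ equals $1$ for $T_r\ge 2$ and $\tfrac14$ for $T_r\le 1$, independently of $T_s$. Hence $Q^{(2)}(\cdot,\cdot,0)$ depends on $T_r$ only, and $\Delta Q^{(2)}(1,1)=\tfrac14$, $\Delta Q^{(2)}(1,2)=\tfrac58$, $\Delta Q^{(2)}(2,1)=-\tfrac14$, $\Delta Q^{(2)}(2,2)=\tfrac18$. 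Therefore $V^{(2)}(2,2)-V^{(2)}(2,1)-V^{(2)}(1,2)+V^{(2)}(1,1)=\tfrac18-\tfrac38=-\tfrac14<0$.

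This is structural, not an accident. The joint hypothesis you want to carry ($\Delta Q$ non-increasing in $T_r$, non-decreasing in $T_s$) gives increasing differences in $(-T_r,a)$ and $(T_s,a)$. Each branch $Q(\cdot,\cdot,a)$ inherits submodularity from $V$, because the transition is coordinatewise monotone and the success term $\mathbb{E}[V(T_s,T_s')]$ depends on $T_s$ alone. Topkis then makes $\max_a Q$ supermodular in $(-T_r,T_s)$, i.e.\ \emph{sub}modular in $(T_r,T_s)$: receiver and sender lifetimes are substitutes. Under that structure your split yields a nonnegative term plus a deficit exactly when $T_r\le T_s$, which is the send region Theorem~\ref{thm:row_threshold} needs. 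The case $T_s<x$ that you set aside is the one submodularity would handle. Your coupling fallback is too unspecified to fill the gap.

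\textbf{What is missing is a quantitative comparison, not a lattice property.} Since $Y\ne Y^+$ only on $\{\tilde T_s\le T_s-1\}$, where $(Y,Y^+)=(T_s-1,T_s)$, you have
$\Delta Q(T_r,T_s{+}1)-\Delta Q(T_r,T_s)=p_s\,\mathbb{E}\bigl[(V(T_s{+}1,\tilde T_s)-V(T_s,\tilde T_s))\mathds{1}\{\tilde T_s\ge T_s\}\bigr]+p_s\Pr(\tilde T_s\le T_s-1)\,E_x(T_s)$,
with $x=(T_r-1)^+$ and
\[
E_x(T_s):=\bigl[V(T_s{+}1,T_s)-V(T_s,T_s{-}1)\bigr]-\bigl[V(x,T_s)-V(x,T_s{-}1)\bigr],
\]
the value of the upgrade in the success branch minus its value in the failure branch.

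You can show $E_x(T_s)\ge 0$ for all $x$ by induction on $T_s$. Bound the first bracket from below via the Bellman equation at $(T_s,T_s-1)$, where silence is optimal by Theorem~\ref{thm:global_structure}(ii); that proof uses only Lemma~\ref{lem:monotone}. Bound the second bracket from above by playing at $(x,T_s-1)$ the action that is optimal at $(x,T_s)$. The rewards cancel, and you obtain
$E_x(T_s)\ge\kappa\bigl[\Pr(\tilde T_s\le T_s-2)\,E_{(x-1)^+}(T_s-1)+\mathbb{E}[(V(T_s,\tilde T_s)-V(T_s-1,\tilde T_s))\mathds{1}\{\tilde T_s\ge T_s-1\}]\bigr]$
with $\kappa\in\{1-p_s,1\}$. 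For $T_s=2$ the recursive term is absent, so the base case holds. This argument needs only Lemma~\ref{lem:monotone}, and neither super- nor submodularity.
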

\begin{proof}
We first derive an explicit expression for the action gap. Using the 
transition $T_{r,n+1} = T_{s,n}$ upon successful delivery:
\begin{align*}
Q(T_r,T_s,1) &= -c + r\bigl[p_s + (1{-}p_s)\mathds{1}\{T_r > 1\}\bigr] \\
&\quad + \mathbb{E}_{T'_s}\bigl[p_s V(T_s, T'_s) + (1{-}p_s) V((T_r{-}1)^+, T'_s)\bigr], \\
Q(T_r,T_s,0) &= r\,\mathds{1}\{T_r > 1\} + \mathbb{E}_{T'_s}\bigl[V((T_r{-}1)^+, T'_s)\bigr],
\end{align*}
where $T'_s = \max\{T_s - 1, \tilde{T}_s\}$ with $\tilde{T}_s \sim p_T$.
Subtracting yields
\begin{align}
\Delta Q(T_r,T_s) &= -c + r p_s \mathds{1}\{T_r \le 1\} \notag \\
&\quad + p_s \mathbb{E}_{T'_s}\bigl[V(T_s, T'_s) - V((T_r{-}1)^+, T'_s)\bigr].
\label{eq:delta_Q_explicit}
\end{align}

\emph{Monotonicity in $T_r$.}
From \eqref{eq:delta_Q_explicit}, we have
\begin{align*}
&\Delta Q(T_r{+}1,T_s) - \Delta Q(T_r,T_s) \\
&= r\,p_s\bigl[\mathds{1}\{T_r{+}1 \le 1\} - \mathds{1}\{T_r \le 1\}\bigr] \\
&\quad - p_s\,\mathbb{E}_{T'_s}\bigl[V(T_r, T'_s) - V((T_r{-}1)^+, T'_s)\bigr].
\end{align*}
For $T_r = 0$, both the indicator difference and value difference vanish. 
For $T_r = 1$, the indicator difference is $-1$, contributing $-rp_s < 0$, 
while the value term is non-positive by Lemma~\ref{lem:monotone}. 
For $T_r \ge 2$, the indicator difference is zero and the value term 
remains non-positive. Thus $\Delta Q(T_r{+}1,T_s) \le \Delta Q(T_r,T_s)$ 
for all $T_r$.

\emph{Monotonicity in $T_s$.}
Let $T_s' = \max\{T_s - 1, \tilde{T}_s\}$ and $T_s'' = \max\{T_s, \tilde{T}_s\}$ 
for the same realization of $\tilde{T}_s$. Note $T_s'' \ge T_s'$ always. 
From \eqref{eq:delta_Q_explicit}, the difference 
$\Delta Q(T_r,T_s{+}1) - \Delta Q(T_r,T_s)$ reduces to
\[
p_s\,\mathbb{E}_{\tilde{T}_s}\bigl[
  V(T_s{+}1, T_s'') - V((T_r{-}1)^+, T_s'')
  - V(T_s, T_s') + V((T_r{-}1)^+, T_s')
\bigr].
\]
By Lemma~\ref{lem:monotone}, $V$ is non-decreasing in both arguments. 
Since $T_s{+}1 \ge T_s$ and $T_s'' \ge T_s'$, the term 
$V(T_s{+}1, T_s'') - V(T_s, T_s')$ captures the joint benefit 
of a better sender state and a stochastically better next-slot 
arrival.  Meanwhile, the terms $V((T_r{-}1)^+, T_s'') - V((T_r{-}1)^+, T_s')$ 
depend only on the change in $T_s'$, not on $T_s$ itself.  
Thus the net difference satisfies
\begin{align*}
&\bigl[V(T_s{+}1, T_s'') - V(T_s, T_s')\bigr] 
- \bigl[V((T_r{-}1)^+, T_s'') - V((T_r{-}1)^+, T_s')\bigr] \\
&\ge \bigl[V(T_s{+}1, T_s'') - V(T_s, T_s'')\bigr] 
+ \bigl[V(T_s, T_s'') - V(T_s, T_s')\bigr] \\
&\quad - \bigl[V((T_r{-}1)^+, T_s'') - V((T_r{-}1)^+, T_s')\bigr] \\
&\ge 0,
\end{align*}
where the first inequality uses $V(T_s, T_s'') \ge V(T_s, T_s')$ 
(monotonicity in $T_s'$) and the second uses $V(T_s{+}1, T_s'') \ge V(T_s, T_s'')$ 
(monotonicity in the first argument), so the positive terms dominate.
Hence $\Delta Q(T_r,T_s) \le \Delta Q(T_r,T_s{+}1)$.
\end{proof}

We now prove the theorem. Since $\Delta Q(\cdot, T_s)$ is non-increasing 
in $T_r$, its sign changes at most once as $T_r$ increases from $0$ to $K$. 
Define
\[
\theta(T_s) := \max\bigl\{T_r : \Delta Q(T_r,T_s) > 0\bigr\},
\]
and set $\pi^*(T_r, T_s) = 1$ for $T_r \le \theta(T_s)$, and $0$ otherwise. 
The single sign-change property ensures this coincides with the true 
maximizer, establishing the threshold form.

For the monotonicity of thresholds, fix $T_s < K$ and note that 
$\Delta Q(\theta(T_s), T_s) \ge 0$ by definition. Since $\Delta Q$ is 
non-decreasing in $T_s$,
\[
\Delta Q(\theta(T_s), T_s + 1) \ge \Delta Q(\theta(T_s), T_s) \ge 0,
\]
so state $(\theta(T_s), T_s{+}1)$ lies in the ``send'' region, implying 
$\theta(T_s) \le \theta(T_s + 1)$.
\hfill\qed
\section{Proof of Theorem~\ref{thm:global_structure}}
\label{app:proof-global-structure}

\noindent\textbf{(i) Empty receiver, positive net gain.}
Using \eqref{eq:delta_Q_explicit} with $T_r = 0$:
\begin{align*}
\Delta Q(0,T_s) &= -c + r\,p_s\,\mathds{1}\{0 \le 1\} \\
&\quad + p_s\,\mathbb{E}_{T_s'}\bigl[V(T_s, T_s') - V(0, T_s')\bigr] \\
&= (p_s\,r - c)  + p_s\,\mathbb{E}_{T_s'}\bigl[V(T_s, T_s') - V(0, T_s')\bigr].
\end{align*}
Since $p_s\,r - c > 0$ by assumption and $V(T_s, T'_s) \ge V(0, T'_s)$ 
by Lemma~\ref{lem:monotone} (with $T_s \ge 1 > 0$), we have 
$\Delta Q(0,T_s) > 0$. Thus $\pi^*(0,T_s) = 1$. \hfill$\square$

\medskip
\noindent\textbf{(ii) Receiver strictly fresher than sender.}
Fix any state with $T_r > T_s$. Using \eqref{eq:delta_Q_explicit}:
\begin{align}
\Delta Q(T_r,T_s) &= -c + r\,p_s\,\mathds{1}\{T_r \le 1\} \nonumber \\
&\quad + p_s\,\mathbb{E}_{T_s'}\bigl[V(T_s, T_s') - V((T_r{-}1)^+, T_s')\bigr].
\end{align}
Since $T_r > T_s \ge 1$, we have $T_r \ge 2$, so $\mathds{1}\{T_r \le 1\} = 0$ 
and $(T_r - 1)^+ = T_r - 1 \ge 1$.

Also, $T_r > T_s$ implies $T_r - 1 \ge T_s$, so by Lemma~\ref{lem:monotone}:
\[
V(T_s, T'_s) \le V(T_r - 1, T'_s).
\]
Therefore:
\begin{align*}
\Delta Q(T_r,T_s) &= -c + p_s\,\mathbb{E}_{T_s'}\bigl[V(T_s, T_s') - V(T_r{-}1, T_s')\bigr] \\
&= -c + p_s\,\mathbb{E}_{T_s'}\bigl[\underbrace{V(T_s, T_s') - V(T_r{-}1, T_s')}_{\le 0}\bigr] \\
&\le -c < 0.
\end{align*}
Thus $\pi^*(T_r,T_s) = 0$. \hfill$\square$

\medskip
\noindent\textbf{(iii) Monotonicity of transmission.}
Suppose it is optimal to transmit at $(T_r, T_s)$ for some $T_r > 0$, i.e., 
$\Delta Q(T_r, T_s) > 0$. By Lemma~\ref{lem:gap-monotonicity} ($\Delta Q$ non-increasing in $T_r$):
\[
\Delta Q(T_r - 1, T_s) \ge \Delta Q(T_r, T_s) > 0.
\]
Hence it is also optimal to transmit at $(T_r - 1, T_s)$. \hfill\qed
\section{Proof of Theorem~\ref{thm:constant_lifetime}}
\label{app:proof-constant-lifetime}

Fix a threshold policy with parameter $\theta \in \{0,1,\ldots,K\}$. 
A renewal cycle begins each time a transmission succeeds, at which 
point the receiver's timer resets to $T_r = K$.

During slots $T_r = K, K{-}1, \ldots, \theta{+}1$, the sender waits; 
this \emph{waiting phase} lasts $K - \theta$ slots. Once $T_r = \theta$, 
transmission attempts begin and continue until success. Since each 
attempt succeeds independently with probability $p_s$, the 
\emph{attempting phase} has duration $N_A \sim \mathrm{Geom}(p_s)$.

By the renewal-reward theorem, the average reward equals 
$\rho(\theta) = \mathbb{E}[R_{\mathrm{cycle}}] / \mathbb{E}[T_{\mathrm{cycle}}]$. 
The expected cycle length is
\[
\mathbb{E}[T_{\mathrm{cycle}}] 
= (K - \theta) + \frac{1}{p_s}.
\]
For the reward, the waiting phase contributes $r(K-\theta)$ since 
the receiver holds a valid sample throughout. During the attempting 
phase, the sender incurs expected cost $c/p_s$, while the receiver 
earns reward $r$ in slot $i$ only if $T_r = \theta - (i-1) > 0$, 
i.e., for $i \le \theta$. Thus the expected happiness reward is 
$r \cdot \mathbb{E}[\min(N_A, \theta)] = r(1-(1-p_s)^\theta)/p_s$. 
Combining terms,
\[
\rho(\theta) 
= r - \frac{c + r(1-p_s)^\theta}{p_s(K - \theta) + 1}.
\]

To maximize $\rho(\theta)$, define 
$g(\theta) := (K - \theta + 1)(1-p_s)^{\theta-1}$. 
One verifies that $g$ is strictly decreasing in $\theta$, and that 
$\rho(\theta) > \rho(\theta-1)$ if and only if $g(\theta) > c/(p_s r)$. 
Hence the optimal threshold is
\[
\theta^* = \max\bigl\{\theta \ge 1 : g(\theta) > c/(p_s r)\bigr\},
\]
with the convention $\theta^* = 0$ if the set is empty.
\hfill\qed

\section{CartPole Experiment Details}
\label{app:cartpole-details}

This appendix collects the implementation details for the CartPole 
case study of Section~\ref{ssec:cartpole-case}.

\paragraph{Plant dynamics.}
We adopt the classic CartPole environment from the Gymnasium 
library~\cite{towers2023gymnasium}.  The state 
$\mathbf{x}_n = (x_n,\dot{x}_n,\theta_n,\dot{\theta}_n)^\top 
\in \mathbb{R}^4$ comprises cart position, cart velocity, pole angle, 
and angular velocity.  The controller selects a discrete action 
$u_n \in \{0,1\}$ (push left/right) at each step.

\paragraph{Target-tracking objective.}
Rather than the default survival reward, we impose a moving-target 
tracking task, reflecting scenarios where the reference trajectory 
is updated dynamically (e.g., mobile robots receiving waypoints from 
a path planner~\cite{pathak2005velocity}).  The reference position 
$t_n$ follows a mean-reverting random walk
$t_n = (1-\kappa)\,t_{n-1} + w_n$, $w_n \sim \mathcal{N}(0,\sigma_T^2)$,
with $\kappa = 0.03$ and $\sigma_T = 0.02$.  The instantaneous reward is
$r(\mathbf{x},t) = 1 - |x - t|$,
with a crash penalty $r_{\mathrm{crash}} = -50$.  Tracking a 
time-varying reference compounds the effect of staleness: outdated 
observations misrepresent both the current state and its deviation 
from a target that has since moved.

\paragraph{Sensor and channel model.}
A remote sensor produces noisy measurements
\begin{equation}\label{eq:noisy-obs-app}
   \mathbf{y}_n = \mathbf{x}_n + \mathbf{v}_n, \qquad
   \mathbf{v}_n \sim \mathcal{N}(\mathbf{0},\sigma_n^2 I_4),
\end{equation}
where $\sigma_n$ is drawn uniformly from 
$\{0, 0.1, \ldots, 1\}$ (11 levels).  Transmissions cost $c \geq 0$ 
and are assumed reliable ($p_s = 1$) to isolate the scheduling effect.

\paragraph{Controller architecture.}
The controller is a recurrent actor-critic network trained via PPO.  
At deployment it receives potentially stale, noisy observations and 
maintains internal state through a GRU.  
Table~\ref{tab:agent_features_app} lists the input features.

\begin{table}[t]
\centering
\caption{Input features for Controller, Predictor, and Periodic agents.}
\label{tab:agent_features_app}
\begin{tabular}{l l c c c}
\toprule
\textbf{Category} & \textbf{Feature} & \textbf{Ctrl.} & \textbf{Pred.} & \textbf{Per.} \\ 
\midrule
\multirow{2}{*}{\textbf{State}} 
 & Current observation ($\mathbf{y}_n$) & -- & \checkmark & -- \\
 & Stale observation ($\hat{\mathbf{y}}_n$) & \checkmark & -- & -- \\
\midrule
\multirow{3}{*}{\textbf{Context}} 
 & Observation age ($\delta_n$) & \checkmark & -- & \checkmark \\
 & Target position ($t_n$) & \checkmark & \checkmark & -- \\
 & Last action ($u_{n-1}$) & \checkmark & \checkmark & -- \\
\midrule
\textbf{Error} 
 & Tracking error ($x_n - t_n$) & \checkmark & \checkmark & -- \\
\midrule
\textbf{Dynamics} 
 & Observation drift ($y_n - y_{n-1}$) & -- & \checkmark & -- \\
\bottomrule
\end{tabular}
\end{table}

\paragraph{Expiration-time predictor.}
For each tolerance~$\epsilon$, we record expiration times across 
training episodes to build a supervised dataset 
$\mathcal{D}_\epsilon$ pairing input features 
(Table~\ref{tab:agent_features_app}) with the measured 
$T_n$.  A feedforward neural network trained on 
$\mathcal{D}_\epsilon$ predicts $T(\mathbf{y}_n;\epsilon)$ at 
deployment without rollout simulations.  
Code is available at~\cite{Ahmed_Beyond_Freshness_and_2025}.
\end{document}